\newcommand{\eps}{\varepsilon}
\newcommand{\complexity}[1]{\ensuremath{\mathcal{#1}}}
\newcommand{\height}[1]{\ensuremath{\MakeLowercase{#1}}}
\newcommand{\width}[1]{\ensuremath{\MakeLowercase{#1}}}
\newcommand{\citesuperscript}[1]{\textsuperscript{\cite{#1}}}  	
\newcommand\opt{\mathrm{OPT}}
\newcommand{\Opt}[1]{\ensuremath{\mathrm{OPT}(#1)}}
\newcommand{\Oh}[1]{\ensuremath{\mathcal{O}(#1)}} 
\newcommand{\Vol}[1]{\ensuremath{\mathcal{A}(#1)}}
\newcommand{\PTAS}{\ensuremath{\mathcal{PT\hspace{-.25em}AS}}}
\newcommand{\FPTAS}{\ensuremath{\mathcal{FPT\hspace{-.25em}AS}}}
\spnewtheorem{observation}{Observation}{\bf}{\it}
\begin{document}

\title{An Absolute 2-Approximation Algorithm for Two-Dimensional Bin Packing}

\author{Rolf Harren and Rob van Stee\thanks{Research supported by German Research Foundation (DFG)}}
\institute{Max-Planck-Institut f\"ur Informatik (MPII), \\ Campus E1 4, D-66123 Saarbr\"ucken, Germany. \\ \email{\{rharren,vanstee\}@mpi-inf.mpg.de}}
\maketitle

\begin{abstract}
We consider the problem of packing rectangles into bins that are unit squares, where the goal is to minimize the number of bins used. All rectangles have to be packed non-overlapping and orthogonal, i.e., axis-parallel. We present an algorithm for this problem with an absolute worst-case ratio of 2, which is optimal provided $\mathcal{P} \not= \mathcal{NP}$.

\noindent\textbf{Keywords:} bin packing, rectangle packing,
approximation algorithm, absolute worst-case ratio
\end{abstract}

\newcommand{\alg}{\mbox{\textsc{alg}}}
\newcommand{\A}{{\cal A}}

\section{Introduction}

In the two-dimensional bin packing problem, a list $I= \{r_1, \ldots, r_n\}$ of rectangles of width $w_i \leq 1$ and height $h_i \leq 1$ is given. An unlimited supply of unit-sized bins is available to pack all items from $I$ such that no two items overlap and all items are packed axis-parallel into the bins. The goal is to minimize the number of bins used. The problem has many applications, for instance in stock-cutting or scheduling on partitionable resources. In many applications, rotations are not allowed because of the pattern of the cloth or the grain of the wood. This is the case we consider in this paper.

Most of the previous work on rectangle packing has focused on the \emph{asymptotic} approximation ratio, i.e., the long-term behavior of the algorithm. 
The asymptotic approximation ratio is defined as follows. 
Let $\alg(I)$ be the number of bins used by algorithm
$\alg$ to pack input $I$. Denote the optimal number of bins by $\Opt{I}$.
The asymptotic approximation ratio
of a two-dimensional bin packing algorithm $\alg$ is defined to be
$$\limsup_{n\to\infty} \sup_I\left\{\left. \frac{\alg(I)}{\opt(I)}\ \right|\ 
\opt(I)=n\right\}.$$
Caprara\citesuperscript{Caprara:2002a} was the first to present an algorithm with an asymptotic approximation ratio less than 2 for two-dimensional bin packing. Indeed, he considered 2-stage packing, in which the items must first be packed into shelves that are then packed into bins, and showed that the asymptotic worst case ratio between two-dimensional bin packing and 2-stage packing is $T_{\infty} = 1.691\ldots$. Therefore the asymptotic \FPTAS\ for 2-stage packing from Caprara, Lodi \& Monaci\citesuperscript{CapraraLodiMonaci:2005a} achieves an asymptotic approximation guarantee arbitrary close to $T_{\infty}$. 

Recently, Bansal, Caprara \& Sviridenko\citesuperscript{BansalCapraraSviridenko:2006a} presented a general framework to improve subset oblivious algorithms and obtained asymptotic approximation guarantees arbitrarily close to $1.525\ldots$ for packing with rotations of 90 degrees or without rotations. These are the currently best-known asymptotic approximation ratios for these problems. For packing squares into square bins, Bansal, Correa, Kenyon \& Sviridenko\citesuperscript{BansalCorreaKenyonSviridenko:2006a} gave an asymptotic \PTAS. On the other hand, the same paper showed the $\mathcal{APX}$-hardness of two-dimensional bin packing without rotations, thus no asymptotic \PTAS\ exists unless $\mathcal{P} = \mathcal{NP}$. 
Chleb\'{\i}k \& Chleb\'{\i}kov\'a\citesuperscript{ChlebikChlebikova:2005a} were the first to give explicit lower bounds of $1 + 1/3792$ and $1+ 1/2196$ on the asymptotic approximability of rectangle packing with and without rotations, respectively.

It should be noted that for some of the positive results mentioned above, 
the approximation ratio only gets close to the stated value for very large
inputs. In particular, the $1.525$-approximation from Bansal et al.\citesuperscript{BansalCapraraSviridenko:2006a} has an additive constant which is not
made explicit in the paper but which the authors believe is extremely 
large\citesuperscript{Bansal:2008}.
Thus, for any reasonable input, the actual (absolute) approximation ratio of
their algorithm is much larger than $1.525$, and it therefore makes sense to
consider alternative algorithms and in particular, an alternative performance
measure.

In the current paper, we consider the absolute approximation ratio. 
This is defined simply as $\sup_I \alg(I)/\opt(I)$, where the supremum is taken
over all possible inputs. Proving a bound on the absolute approximation gives
us a performance guarantee for all inputs, not just for (very) large ones.
Zhang\citesuperscript{Zhang:2005a} presented an approximation algorithm with an absolute approximation ratio of $3$ for the problem without rotations. For the special case of packing squares, van Stee\citesuperscript{Stee:2004a} showed that an absolute 2-approximation is possible. Additionally, Harren \& van Stee\citesuperscript{HarrenvanStee:2009} presented an
absolute 2-approximation for packing rectangles in the case that 90 degree rotations
are allowed. They also showed that the algorithm Hybrid First Fit has an absolute
approximation ratio of 3 for packing without rotations, as conjectured by Zhang\citesuperscript{Zhang:2005a}.

A related two-dimensional packing problem is the strip packing problem, where the items have to be packed into a strip of unit basis and unlimited height such that the height is minimized. Steinberg\citesuperscript{Steinberg:1997a} and Schiermeyer\citesuperscript{Schiermeyer:1994a} presented absolute 2-approximation algorithms for strip packing without rotations. 
We use Steinberg's algorithm in particular as a subroutine in our algorithm.

\paragraph{Our contribution.} We present an approximation algorithm for two-dimensional bin packing with an absolute approximation ratio of 2. As Leung et al.\citesuperscript{LeungTamWongYoungChin:1990a} showed that it is strongly \complexity{NP}-complete to decide wether a set of \emph{squares} can be packed into a given square, this is best possible unless $\mathcal{P} = \mathcal{NP}$.

The two-dimensional bin packing problem can also be seen as a scheduling problem. Here the tasks $r_i = (w_i,h_i)$ have a running time $w_i$ and need $h_i$ consecutive machines to be proceeded. The restriction to a polynomial number of machines can be solved with methods from Jansen \& Th\"ole\citesuperscript{JansenThole:2008}, but the unrestricted version that we consider in this paper appears to be considerably more difficult.

\section{Important tools and definitions}\label{sec:tools}
In this section we give the necessary definitions and introduce results that are important for our work. 

Let $I= \{r_1, \ldots, r_n\}$ be the set of given rectangles, where $r_i = (w_i, h_i)$. Let $W_{\delta} = \{r_i \mid w_i > 1-\delta\}$ be the set of so-called \emph{$\delta$-wide} items and let $H_{\delta} = \{r_i \mid h_i > 1-\delta\}$ be the set of \emph{$\delta$-high} items. To simplify the presentation, we denote the $1/2$-wide items as \emph{wide} items and the $1/2$-high items as \emph{high} items. Let $W$ and $H$ be the sets of wide and high items, respectively. The set of \emph{small} items, i.e., items $r_i$ with $w_i \leq 1/2$ and $h_i \leq 1/2$, is denoted by $S$. Finally, we call items that are wide and high at the same time \emph{big}.

For a set $T$ of items, let $\Vol{T} = \sum_{i\in T} w_i h_i$ be the total area and let $\height{h}(T) = \sum_{r_i \in T} h_i$ and $\width{w}(T) = \sum_{r_i \in T} w_i$ be the total height and total width, respectively. Finally, let $w_{\max}(T) = \max_{r_i \in T} w_i$ and $h_{\max}(T) = \max_{r_i \in T} h_i$.

Steinberg\citesuperscript{Steinberg:1997a} proved the following theorem for his algorithm that we use as a subroutine. 
\begin{theorem}[Steinberg's algorithm] \label{thm:Steinberg} 
If the following inequalities hold,
\begin{displaymath}
w_{\max}(T) \leq a, \quad h_{\max}(T)Ê\leq b, \textrm{ and} \quad 2 \Vol{T} \leq ab - (2w_{\max}(T) - a)_{+} (2h_{\max}(T) - b)_{+}
\end{displaymath}
where $x_{+} =\max(x, 0)$, 
then it is possible to pack all items from $T$  into $R = (a,b)$ in time $\Oh{(n \log^2 n) / \log \log n}$. 
\end{theorem}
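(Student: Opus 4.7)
The plan is to prove the statement by strong induction on $|T|$. The base cases $|T| \in \{0,1\}$ are immediate: a single rectangle with $w_1 \leq w_{\max}(T) \leq a$ and $h_1 \leq h_{\max}(T) \leq b$ fits into any corner of $R$. For the inductive step, I would first sort the items and identify a ``critical'' item that either dominates one dimension of $R$ or that can be used to split the packing region.

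The first main case is when some dimension has a ``huge'' item, say $2 h_{\max}(T) > b$. Here the correction term $(2 w_{\max}(T) - a)_{+}(2 h_{\max}(T) - b)_{+}$ can be nonzero only if additionally $2 w_{\max}(T) > a$. In either sub-case I would place all items with $h_i > b/2$ side-by-side along the bottom of $R$ (they can only sit next to each other, since each occupies more than half of $R$'s height), verify from the area hypothesis that their total width fits in $a$, and then recurse on the remaining items in a residual sub-rectangle of height $b - h_{\max}(T)$. The essential calculation is that the area consumed by the removed strip, together with the drop in the correction term, still leaves the inductive invariant intact on the smaller rectangle. The case $2 w_{\max}(T) > a$ is symmetric.

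When neither $2 w_{\max}(T) > a$ nor $2 h_{\max}(T) > b$, the correction term vanishes and the condition simplifies to $2 \Vol{T} \leq ab$. Here I would apply a balanced splitting argument: sort items by decreasing width, greedily accumulate them into one of two groups while aiming for equal area, and cut $R$ by a vertical line at abscissa $a/2$. Each resulting half-rectangle of dimensions $(a/2, b)$ then receives a sub-list of total area at most $ab/4$ and maximum width at most $a/2$, so that $2 \Vol{\cdot} \leq (a/2) \cdot b$ and the induction applies. The subtlety is that items of width strictly exceeding $a/2$ must be peeled off first through the ``critical'' case above, so that they do not obstruct a clean vertical cut.

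The hard part will be making the splitting calculation tight: after removing a critical item and shrinking $R$, the area accounting must compensate exactly for the change in the correction term $(2 w_{\max} - a)_{+}(2 h_{\max} - b)_{+}$, and new maxima of the residual set could in principle violate the invariant unless the reduction is chosen with care. I also expect that achieving the running time $\Oh{(n \log^2 n)/\log \log n}$ requires maintaining sorted structures over widths and heights that support deletions and maxima queries efficiently, with the $1/\log \log n$ factor coming from a more sophisticated data structure (for instance, a fusion-tree or integer-sorting style structure) rather than from plain comparison-based sorting inside each recursive call.
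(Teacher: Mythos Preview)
The paper does not prove Theorem~\ref{thm:Steinberg} at all; it is quoted verbatim as an external result of Steinberg and used only as a black-box subroutine. There is therefore no ``paper's own proof'' to compare against, and any attempt to reprove the theorem here goes beyond what the paper does.

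That said, your sketch has a genuine gap in the second main case. When $2w_{\max}(T)\le a$ and $2h_{\max}(T)\le b$, you propose to cut $R$ vertically at $a/2$ and split the items into two groups of total area at most $ab/4$ each. But greedy balancing does not guarantee this: if the total area is close to $ab/2$ and one item has area close to $ab/4$, the larger group can end up with area strictly above $ab/4$, so the inductive invariant $2\Vol{\cdot}\le (a/2)b$ fails on that side. Steinberg's actual argument is considerably more delicate than a single balanced cut; it uses a repertoire of seven reduction procedures (often labelled $P_{-3},\ldots,P_{3}$), each triggered by specific inequalities among $a,b,w_{\max},h_{\max}$ and partial sums of widths or heights, and the crux is showing that at least one procedure always applies and strictly decreases a suitable potential. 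Your ``peel off huge items then bisect'' outline captures the flavour of one or two of those procedures but not the full case analysis, and in particular the clean vertical bisection you describe is not one of Steinberg's rules. The claimed running time also does not follow from the data-structure remarks you make; the $(\log^2 n)/\log\log n$ factor in Steinberg's paper comes from the specific recursion depth and the work per level, not from fusion trees.
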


Jansen \& Zhang\citesuperscript{JansenZhang:2007a} showed the following corollary from this theorem.
\begin{corollary}\label{cor:big}
If the total area of a set $T$ of items is at most $1/2$ and there are no wide items (except a possible big item) then the items in $T$ can be packed into a bin.
\end{corollary}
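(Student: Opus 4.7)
The plan is to apply Steinberg's theorem (Theorem~\ref{thm:Steinberg}), splitting on whether $T$ contains a big item.

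\emph{Case 1 (no big item in $T$).} Every item has width at most $1/2$, so $w_{\max}(T) \leq 1/2$ and hence $(2w_{\max}(T) - 1)_+ = 0$. Steinberg's inequality with $a = b = 1$ then reduces to $2\Vol{T} \leq 1$, which is exactly the hypothesis, and the theorem packs $T$ into the bin.

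\emph{Case 2 (big item $r^* = (w^*, h^*)$ with $w^*, h^* > 1/2$).} I would place $r^*$ in the bottom-left corner of the bin and cut the remaining L-shape into the top-left rectangle $R_L = [0, w^*] \times [h^*, 1]$ of dimensions $w^* \times (1 - h^*)$ and the right rectangle $R_R = [w^*, 1] \times [0, 1]$ of dimensions $(1 - w^*) \times 1$. I partition $T' := T \setminus \{r^*\}$ by width into $T_L$ (items with $w_i > 1 - w^*$, destined for $R_L$) and $T_R$ (items with $w_i \leq 1 - w^*$, destined for $R_R$).

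The key structural observation is that no item of $T'$ can satisfy both $w_i > 1 - w^*$ and $h_i > 1 - h^*$ simultaneously. Such an item would have area strictly greater than $(1 - w^*)(1 - h^*)$, but $\Vol{T'} \leq 1/2 - w^* h^* < (1 - w^*)(1 - h^*)$ whenever $w^*, h^* > 1/2$; this is the routine check that $w^* + h^* - 2 w^* h^* < 1/2$ in this regime, coming from $\partial (w + h - 2wh)/\partial w = 1 - 2h < 0$. Consequently every item of $T_L$ has $h_i \leq 1 - h^*$ and fits dimensionally in $R_L$, and every item of $T_R$ already fits dimensionally in $R_R$. Moreover, since each $r_i \in T_L$ has $w_i > 1 - w^*$, summing gives $(1 - w^*) \sum_{T_L} h_i < \Vol{T_L} \leq 1/2 - w^* h^* < (1 - w^*)(1 - h^*)$, so $\sum_{T_L} h_i < 1 - h^*$, and the items of $T_L$ can simply be stacked vertically (left-aligned) inside $R_L$.

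To finish I apply Steinberg's theorem to $T_R$ in $R_R$. The main obstacle is verifying Steinberg's inequality here: when both $w_{\max}(T_R) > (1 - w^*)/2$ and $h_{\max}(T_R) > 1/2$, the excess term $(2w_{\max}(T_R) - (1 - w^*))(2h_{\max}(T_R) - 1)$ is strictly positive, and the inequality must be reestablished using the tight bound $\Vol{T_R} \leq 1/2 - w^* h^*$ together with $w^*, h^* > 1/2$. Intuitively, a $T_R$-item coming close to filling both dimensions of $R_R$ would, combined with the big item's area $w^* h^* > 1/4$, exceed the total area budget $1/2$, which is what forces the excess term to be absorbed by the available slack.
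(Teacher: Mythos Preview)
The paper itself does not prove this corollary; it simply attributes it to Jansen and Zhang~\cite{JansenZhang:2007a}. So there is no ``paper's proof'' to compare against, and I can only assess your argument on its own merits.

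Your Case~1 is correct and immediate.

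Your Case~2 has a genuine gap at the final step, the application of Steinberg's theorem to $T_R$ in $R_R = (1-w^*)\times 1$. The excess term $(2w_{\max}(T_R) - (1-w^*))_+ (2h_{\max}(T_R) - 1)_+$ depends on the maximum width and maximum height of items in $T_R$, and these maxima need not be attained by the same item. Your intuitive justification (``a $T_R$-item coming close to filling both dimensions of $R_R$ would \ldots exceed the total area budget'') only rules out a \emph{single} item that is simultaneously wide and tall within $R_R$; it says nothing when one thin-but-tall item supplies $h_{\max}$ and a different short-but-wide item supplies $w_{\max}$.

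Concretely, take $r^* = (0.6, 0.6)$, $r_1 = (0.38, 0.1)$, $r_2 = (0.05, 0.95)$. Then $\Vol{T} = 0.36 + 0.038 + 0.0475 = 0.4455 \leq 1/2$, and only $r^*$ is wide. Since $0.38, 0.05 \leq 0.4 = 1 - w^*$, both $r_1, r_2$ land in $T_R$ and $T_L = \emptyset$. Here $w_{\max}(T_R) = 0.38$, $h_{\max}(T_R) = 0.95$, so the right-hand side of Steinberg's inequality is
\[
(1-w^*)\cdot 1 - (0.76 - 0.4)(1.9 - 1) = 0.4 - 0.324 = 0.076,
\]
while $2\Vol{T_R} = 0.171$. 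The condition fails, so your scheme does not pack this instance.

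To repair Case~2 you must treat the high items of $T' = T\setminus\{r^*\}$ separately before invoking Steinberg on a rectangle of height~$1$. For instance, one can first stack $H' := \{r_i \in T' : h_i > 1/2\}$ horizontally alongside $r^*$: this fits because $w(H') < 2\Vol{H'} \leq 1 - 2w^*h^* < 1 - w^*$, the last inequality using $h^* > 1/2$. After that, every remaining item satisfies $w_i, h_i \leq 1/2$, and the residual L-shaped region must still be filled by a further (genuine) argument; this is what the Jansen--Zhang proof provides.
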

Obviously, the corollary also holds for the case that there are no high items (except a possible big item). This corollary is an improvement upon Theorem~\ref{thm:Steinberg} if there is a big item in $T$ as in this case Theorem~\ref{thm:Steinberg} would give a worse area bound.

Bansal, Caprara \& Sviridenko\citesuperscript{BansalCapraraSviridenko:2008a} considered the two-dimensional knapsack problem in which each item $r_i \in I$ has an associated profit $p_i$ and the goal is to maximize the total profit that is packed into a unit-sized bin. Using a very technical \emph{Structural Lemma} they derived an algorithm that we call BCS algorithm in this paper. They showed the following theorem.

\begin{theorem}[Bansal, Caprara \& Sviridenko]\label{thm:generalPTAS}
For any fixed $r \geq 1$ and $\delta > 0$, the BCS algorithm returns a packing of value at least $(1-\eps)\opt_{2KP}(I) - \eps$ for instances $I$ for which $p_i/\Vol{r_i} \in [1,r]$ for $r_i \in I$. The running time of the BCS algorithm is polynomial in the number of items.
\end{theorem}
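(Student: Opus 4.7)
The plan is to follow the standard PTAS framework for geometric two-dimensional packing: establish a Structural Lemma saying that near-optimal packings can be assumed to have a simple combinatorial form, then enumerate all such forms in polynomial time. The bounded ratio hypothesis $p_i/\Vol{r_i} \in [1,r]$ is what makes this work: any change that removes items covering an $\eps$-fraction of the area costs at most an $r\eps$-fraction of the profit, so area-based arguments can be converted into profit-based guarantees, and an arbitrary small additive term can be absorbed into the $-\eps$ slack.

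First I would classify items by their dimensions. Using a standard shifting argument, pick thresholds $\delta_1 > \delta_2 > \ldots$ so that items with width (or height) lying in one specific narrow band $(\delta_{i+1}, \delta_i]$ carry at most an $\eps$-fraction of the total profit, and discard those items. The remainder splits cleanly into big items (both dimensions large), wide items (only width large), tall items (only height large), and small items, where the thresholds separating ``large'' from ``small'' along each axis can be made arbitrarily different.

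Next comes the Structural Lemma, which is the main technical content. Starting from an optimal packing, transform it by a sequence of local moves --- shifting items onto a discretized grid, grouping items of similar size, replacing a group of comparable items by the ones of highest profit density, and bucketing small items into strips --- until the resulting packing consists of a constant number $k(\eps)$ of axis-aligned rectangular containers of a constant number of fixed types. Each container holds either $\Oh{1}$ large items in a simple shelf pattern, or a collection of small items packable by Steinberg's theorem (Theorem~\ref{thm:Steinberg}) based on area alone. The bounded profit-to-area ratio guarantees that every discarding step costs only an $\Oh{\eps}$ fraction of the profit.

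Finally, the algorithm enumerates all container configurations: their coordinates (drawn from a polynomial-sized grid derived from item dimensions) and their types. Since there are $\Oh{1}$ containers per configuration, the total number of configurations is $n^{\Oh{1}}$. For each one, assign big/wide/tall items to containers by brute-force enumeration and select small items for the residual regions via a linear program (rounded with bounded loss), then invoke Theorem~\ref{thm:Steinberg} to produce an actual packing of the small items. Return the most profitable solution found. The main obstacle, by far, is the Structural Lemma itself: showing that a constant-size structural description suffices requires a delicate interleaving of transformations that must simultaneously handle items of very different aspect ratios while preserving packability, and this is the source of the ``very technical'' nature of the argument alluded to in the quoted reference.
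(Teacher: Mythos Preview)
The paper does not give a proof of this theorem at all: it is quoted as a black-box result from Bansal, Caprara \& Sviridenko (reference \cite{BansalCapraraSviridenko:2008a} in the paper), with only the remark that the proof in that reference goes through ``a very technical \emph{Structural Lemma}.'' So there is no ``paper's own proof'' for you to match.

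Your outline is a reasonable high-level sketch of how the BCS argument actually proceeds in the cited source --- shifting to create a size gap, a Structural Lemma reducing near-optimal packings to a constant number of containers, enumeration of configurations, and an LP-based assignment for small items --- and you correctly identify the bounded profit-to-area ratio as the mechanism that converts area losses into bounded profit losses. But none of this is in the present paper, which simply imports Theorem~\ref{thm:generalPTAS} and its corollary as tools. If your task was to reproduce the paper's treatment of this statement, the accurate answer is: the paper cites it without proof.
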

Here $\opt_{2KP}(I)$ denotes the maximal profit that can be packed in a bin of unit size.
In the case that $p_i = \Vol{r_i}$ we want to maximize the total packed area. Let $\opt_{(a,b)}(T)$ denote the maximum area of items from $T$ that can be packed into the rectangle $(a,b)$, where individual items in $T$ do not necessarily fit in $(a,b)$. By appropriately scaling the bin, the items and the accuracy we get the following corollary.

\begin{corollary}[Bansal, Caprara \& Sviridenko]\label{thm:PTAS}
For any fixed $\eps > 0$, the BCS algorithm returns a packing of $I' \subseteq I$ in a bin of width $a\leq 1$ and height $b\leq 1$ such that $\Vol{I'}Ê\geq \opt_{(a,b)}(I) - \eps$.
\end{corollary}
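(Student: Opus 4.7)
The plan is to derive the corollary from Theorem~\ref{thm:generalPTAS} by a straightforward scaling argument, together with the natural choice of profits $p_i = \Vol{r_i}$.

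First I would pre-process the instance: discard from $I$ every item $r_i$ with $w_i > a$ or $h_i > b$, since no such item can contribute to $\opt_{(a,b)}(I)$. Call the reduced instance $\tilde I$. Next I would rescale the geometry so that the knapsack bin becomes the unit square: replace each surviving item $r_i = (w_i, h_i)$ by $\tilde r_i = (w_i/a, h_i/b)$. After rescaling, every $\tilde r_i$ fits in the unit bin and the correspondence between packings of $\tilde I$ into $(1,1)$ and packings of the original items into $(a,b)$ is a bijection which multiplies every area by $1/(ab)$.

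Now I would set profits $p_i = \tilde w_i \tilde h_i = w_i h_i/(ab)$, i.e., equal to the (scaled) area of $\tilde r_i$. Then $p_i/\Vol{\tilde r_i} = 1$ for every item, so the hypothesis of Theorem~\ref{thm:generalPTAS} is satisfied with $r = 1$. Invoking that theorem with accuracy parameter $\eps' := \eps/2$ produces a subset $I' \subseteq \tilde I$ packed in $(1,1)$ whose total profit (i.e.\ total scaled area) satisfies
\begin{displaymath}
\sum_{i \in I'} \tilde w_i \tilde h_i \;\geq\; (1-\eps')\,\opt_{2KP}(\tilde I) - \eps',
\end{displaymath}
where $\opt_{2KP}(\tilde I) = \opt_{(a,b)}(I)/(ab)$ by the bijection.

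Finally I would unscale by multiplying both sides by $ab \leq 1$. This yields
\begin{displaymath}
\Vol{I'} \;\geq\; (1-\eps')\,\opt_{(a,b)}(I) - \eps' ab \;\geq\; \opt_{(a,b)}(I) - \eps'\,\opt_{(a,b)}(I) - \eps',
\end{displaymath}
and since $\opt_{(a,b)}(I) \leq ab \leq 1$ the right-hand side is at least $\opt_{(a,b)}(I) - 2\eps' = \opt_{(a,b)}(I) - \eps$, as required. The running time remains polynomial because the BCS algorithm is polynomial for each fixed accuracy. There is no real obstacle here; the only point that needs attention is keeping track of the factor $ab$ when converting the additive $\eps'$ error of Theorem~\ref{thm:generalPTAS} back to the original scale, which is why the accuracy is set to $\eps/2$ rather than $\eps$.
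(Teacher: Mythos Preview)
Your argument is correct and follows exactly the approach the paper indicates: the paper does not give a detailed proof but simply states that the corollary follows ``by appropriately scaling the bin, the items and the accuracy,'' which is precisely what you carry out (discard items that do not fit, rescale to a unit bin, set $p_i=\Vol{\tilde r_i}$ so that $r=1$, and choose accuracy $\eps/2$). Your bookkeeping of the factor $ab$ when translating the additive error back is the only nontrivial detail, and you handle it correctly.
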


\section{Our algorithm: Overview}

As the asymptotic approximation ratio of the algorithm from Bansal, Caprara \& Sviridenko\citesuperscript{BansalCapraraSviridenko:2006a} is arbitrarily close to $1.525\ldots$, there exists a constant $k$ such that for any instance $I$ with optimal value larger than $k$, their algorithm gives a solution of value at most $2\,\opt(I)$. As we already mentioned in the introduction, this constant $k$ is not explicitly known.
We address the problem of approximating the problem within an absolute factor of 2, provided that the optimal value of the given instance is less than $k$. 
Combined with the algorithm from Bansal et al.\ we get an overall algorithm with an absolute approximation ratio of 2. 

Our approach for packing instances $I$ with $\Opt{I} \leq k$ consists of two parts. First, we give an algorithm that is able to pack instances $I$ with $\Opt{I} = 1$ in two bins and second, we show how to approximate instances with $1 < \Opt{I} < k$ within a factor of 2. This at first glance surprising distinction is due to the inherent difficulty of packing wide and high items together into a single bin. In the case $\Opt{I}Ê= 1$ we can not ensure a separation of the wide and high items into feasible sets whereas for $\Opt{I} > 1$ this is possible in most cases. 

\paragraph{Organisation.} In Section~\ref{sec:Opt1} we present the algorithm for packing instances that fit into one bin. In a first step we show how to pack these instances in two bins if the overall height of the $\delta$-wide items is suitably bounded by a function in $\delta$ for some $\delta \in (\eps, 1/2]$. If this is not the case we are able to derive substantial area guarantees for the sets of wide and high items. These area guarantees are utilized  in the second step to pack all other instances $I$ with $\Opt{I} = 1$.

The algorithm for instances $I$ with optimal value within $2$ and $k$ is presented in Section~\ref{sec:OptConst}. The basic algorithm consists of a large enumeration of items with area at least $\eps$ (for some constant $\eps > 0$) and a separation of the wide and high items. On this basis we distinguish a total of four main cases that require different methods to be packed.

Finally, in Section~\ref{sec:final} we put the different parts of the algorithm together.

\section{Packing instances that fit into one bin}
\label{sec:Opt1}
Throughout this section we assume that the given instance $I$ can be packed into a single bin, i.e., $\Opt{I} = 1$. At first glance it seems surprising that packing such an instance into two bins is difficult. Still, we need to carefully analyse different cases to be able to give a polynomial-time algorithm that solves this problem.

Let $0 < \eps < 1/200$ be a fixed constant. In a first step we consider instances that satisfy an upper bound for the total height of the $\delta$-wide items for some $\delta \in (\eps, 1/2]$. Using Steinberg's algorithm and the BCS algorithm we are able to pack these instances into two bins. In a second step we benefit from the upper bounds, as the instances that remain violate these bounds for \emph{all} $\delta \in (\eps, 1/2]$ and thus we have wide items of large total area. We use this area guarantee to give different methods to pack according to the total height of the wide items.

\subsection{Small total height of the $\delta$-wide items}

We want to derive a packing of $I$ into two bins in the case that the total height of the $\delta$-wide items is small relative to $\delta$, i.e., $\height{h}(W_{\delta}) \leq (\delta  - \eps)/(1+2\delta)$ for some $\delta \in (\eps,1/2]$. 

For the ease of presentation let $\gamma = (\delta  - \eps)/(1+2\delta) \leq 1/4$ and assume that $\height{h}(W_{\delta}) \leq \gamma$. In a first step, we show that an almost optimal solution with a special structure exists. This special structure consists of a part of width $\width{w}(H_{\gamma})$ for the $\gamma$-high items and a part of width $1-\width{w}(H_{\gamma})$ for the other items. The following lemma shows that almost all other items can be packed.

\begin{lemma} \label{lem:OPTsmallHeight}
We have $\opt_{(1-\width{w}(H_{\gamma}), 1)}(I \setminus H_{\gamma}) \geq \Vol{I \setminus H_{\gamma}} - 2\gamma$.
\end{lemma}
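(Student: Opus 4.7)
My plan is to transform an optimal packing of $I$ into the unit bin into a packing of $I \setminus H_\gamma$ into the rectangle $(1-\width{w}(H_{\gamma}), 1)$ by horizontally ``compressing away'' the strips occupied by $H_\gamma$ items, losing only a controlled amount of area. Since every item in $H_\gamma$ has height greater than $1-\gamma > 1/2$, any two such items must have disjoint horizontal projections; let $X \subseteq [0,1]$ be the union of these projections, so $|X| = \width{w}(H_{\gamma})$. I call $r \in I \setminus H_\gamma$ \emph{good} if its horizontal projection is disjoint from $X$ and \emph{bad} otherwise, and observe that any bad item has height strictly less than $\gamma$: at any point of its projection lying in $X$ it shares the column with an $H_\gamma$ item of height exceeding $1-\gamma$.

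For the good items I would apply the monotone compression $\sigma(x) = x - |X \cap [0,x]|$, which maps $[0,1]$ onto $[0,1-\width{w}(H_{\gamma})]$. On the projection of a good item, $\sigma$ acts as a pure translation, so widths are preserved; by monotonicity, good items that were horizontally disjoint stay so, while pairs with overlapping $x$-extents retain their original disjoint $y$-extents. Hence the good items yield a valid packing of area $\Vol{I \setminus H_\gamma} - \Vol{B}$ inside $(1-\width{w}(H_{\gamma}),1)$, where $B$ denotes the set of bad items, so that $\opt_{(1-\width{w}(H_{\gamma}),1)}(I \setminus H_\gamma) \geq \Vol{I \setminus H_\gamma} - \Vol{B}$.

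It remains to prove $\Vol{B} < 2\gamma$, which I would do by integrating the bad density $\rho(x) := \sum\{ h_r : r \in B,\ x \in \mathrm{proj}_x(r)\}$. For $x \in X$, the $H_\gamma$ item at $x$ already occupies more than $1-\gamma$ of the column, so $\rho(x) < \gamma$, contributing less than $\gamma\,\width{w}(H_{\gamma})$ to the integral. The delicate step, which I expect to be the main obstacle, is bounding $\rho$ on $X^c$: any bad item whose projection contains some $x \in X^c$ must reach $X$ either leftward or rightward of $x$. Letting $y^L(x)$ be the largest element of $X \cap [0,x]$ when it exists, every left-reaching bad item at $x$ also contains $y^L(x)$ in its projection, so the sum of their heights is at most $\rho(y^L(x)) < \gamma$; the symmetric argument anchored at the smallest element of $X \cap [x,1]$ handles the right-reaching items. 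Hence $\rho(x) < 2\gamma$ on $X^c$, and integrating yields $\Vol{B} < \gamma\,\width{w}(H_{\gamma}) + 2\gamma\,(1-\width{w}(H_{\gamma})) = \gamma(2 - \width{w}(H_{\gamma})) \leq 2\gamma$, which completes the proof.
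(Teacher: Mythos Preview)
Your argument is correct, but it works harder than necessary compared to the paper's proof. Both proofs share the same skeleton: discard a set of items of total area at most $2\gamma$ so that nothing overlaps the columns of $H_\gamma$, then horizontally slide/compress the surviving items into width $1-w(H_\gamma)$. The difference is in which items get discarded and how the area bound is obtained.

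The paper simply removes every item that lies entirely in the top or bottom $\gamma$-margin; these two strips together have area $2\gamma$, so the bound is immediate. The point is that any item sharing a column with some $r\in H_\gamma$ (your ``bad'' items) must sit in the vertical complement of $r$ at that column, and since $h_r>1-\gamma$ this complement is contained in $[0,\gamma)\cup(1-\gamma,1]$; because the item is a rectangle, it is therefore entirely inside one of the two margins. So your set $B$ is a subset of the margin items, and the paper's one-line area estimate subsumes your density argument. After removal, vertical cuts along the sides of each $H_\gamma$ item pass only through empty space, so one can permute the resulting slabs and push $H_\gamma$ to the left --- which is exactly your compression map $\sigma$ described combinatorially.

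What each approach buys: the paper's route is shorter and avoids the $\rho$-integration entirely. Your route discards fewer items and actually yields the sharper bound $\Vol{B}\le \gamma\,(2-w(H_\gamma))$, but that refinement is not needed for the lemma. If you had added the one observation that every bad item lies in a $\gamma$-margin, your proof would collapse to the paper's.
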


\begin{proof}
Consider an optimal packing of $I$. 
Remove all items that are completely contained in the top or bottom $\gamma$-margin. After this step there is no item directly above or below any item of $H_{\gamma} = \{r_i \mid h_i > 1-\gamma\}$. Thus we can cut the remaining packing at the left and right side of any items from $H_{\gamma}$. These cuts partition the packing into parts which can be swapped without losing any further item. Move all items of $H_{\gamma}$ to the left of the bin and move all other parts of the packing to the right. The total area of the removed items is at most $2\gamma$ and thus a total area of at least $\Vol{I\setminus H_{\gamma}} - 2\gamma$ fits into a bin of size $(1-\width{w}(H_{\gamma}),1)$. \qed
\end{proof}

In the second step, we actually derive a feasible packing that is based on the structure described above. First pack $H_{\gamma}$ into a stack of width $\width{w}(H_{\gamma})$ at the left side of the first bin. Note that $\width{w}(H_{\gamma}) \leq 1$. This leaves an empty space of width $1-\width{w}(H_{\gamma})$ and height $1$ at the right. We therefore apply the BCS algorithm on $I \setminus H_{\gamma}$ and a bin of size $(1-\width{w}(H_{\gamma}), 1)$ using an accuracy of $\eps$. Lemma~\ref{lem:OPTsmallHeight} and Corollary~\ref{thm:PTAS} yield that at least a total area of $\Vol{I \setminus H_{\gamma}} - 2\gamma - \eps$ is packed by the algorithm.

Let $T$ be the set of remaining items. We have $\Vol{T} \leq 2\gamma + \eps$. Pack the remaining $\delta$-wide items, i.e., the items of $T \cap W_{\delta}$, in a stack at the bottom of the second bin. Let $\height{h}' = \height{h}(T \cap W_{\delta})$ be the total height of these items. We have  $\height{h}' \leq \height{h}(W_\delta) \leq \gamma$. Let $T' = T \setminus W_{\delta}$ be the set of remaining items after this step. Then for $r_i \in T'$ with $r_i = (w_i, h_i)$ we have $w_i \leq 1-\delta$ and $h_i \leq 1-\gamma \leq 1-\height{h}'$. The total area of the remaining items is 
\begin{align*}
\Vol{T'} \leq \Vol{T} - (1-\delta)\height{h}' \leq 2\gamma + \eps - \height{h}' + \delta\height{h}'.
\end{align*}
We pack these items with Steinberg's algorithm into the free rectangle of size $(a,b)$ with $a=1$ and $b = 1-\height{h}'$ at the top of the second bin. In the following we show that this is possible by verifying the conditions of Theorem~\ref{thm:Steinberg}. We already stated that $w_{\max}(T') \leq 1 - \delta \leq 1$ and $h_{\max}(T') \leq 1 - \gamma \leq 1 - \height{h}'$. To show $2\Vol{T'} \leq ab - (2w_{\max} - a)_{+} (2h_{\max} - b)_{+}$ we use 
\begin{align*}
2\Vol{T'} \leq 2(2\gamma + \eps - \height{h}'+ \delta\height{h}')
\end{align*}
and (since $\delta \leq 1/2$ and therefore $\gamma \leq 1/4$)
\begin{align*}
ab - (2w_{\max} - a)_{+} (2h_{\max} - b)_{+} & \geq 1-\height{h}' - (2(1-\delta)-1)_+(2(1-\gamma) - (1-\height{h}'))_+  \\
&= 1-\height{h}' - (1-2\delta)_+(1-2\gamma+\height{h}')_+ \\ 
&= 1-\height{h}' - (1-2\delta - 2\gamma + 4\delta\gamma + \height{h}' - 2 \delta\height{h}') \\
&= 2(\delta + \gamma - 2\delta\gamma - \height{h}' + \delta\height{h}').
\end{align*}
Starting with the definition of $\gamma$ we get
\begin{align*}
&& \gamma & = \frac{\delta -\eps}{1+2\delta} \\
\Leftrightarrow && \gamma(1+2\delta) & = \delta - \eps \\
\Leftrightarrow && \gamma + \eps + 2\delta\gamma & = \delta \\
\Leftrightarrow && 2\gamma + \eps - \height{h}' + \delta\height{h}' & = \delta + \gamma - 2\delta\gamma - \height{h}' + \delta\height{h}' \\
\Rightarrow && 2\Vol{T'} & \leq ab - (2w_{\max} - a)_{+} (2h_{\max} - b)_{+}.
\end{align*}

So far we assumed the knowledge of $\delta \in (\eps, 1/2]$ for which $h(W_{\delta}) \leq (\delta - \eps)/(1+2\delta)$. It is easy to see that this value can be computed by calculating $h(W_{\delta})$ for $\delta = 1-w_i$ for all $r_i = (w_i, h_i)$ with $w_i > 1/2$. As $h(W_{\delta})$ changes only for these values of $\delta$, we will necessarily find a suitable $\delta$ if one exists. We therefore have the following lemma. 
\begin{lemma}\label{lem:smallheight}
For any fixed $\eps > 0$, there exists a polynomial-time algorithm that, given an instance $I$ with $\Opt{I}Ê= 1$ and $\height{h}(W_{\delta}) \leq (\delta  - \eps)/(1+2\delta)$ for some $\delta \in (\eps, 1/2]$, returns a packing of $I$ into two bins.
\end{lemma}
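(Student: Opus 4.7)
My plan is to essentially assemble the construction that was sketched in the preceding discussion into a formal algorithm, deal with the issue of locating the correct $\delta$, and then verify that every step is feasible.

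First I would address the apparent circularity: the hypothesis requires a $\delta \in (\eps,1/2]$ satisfying the height bound, but the algorithm must find such a $\delta$. The key observation is that $\height{h}(W_\delta)$ is a step function in $\delta$, changing value only at $\delta = 1 - w_i$ for items $r_i$ with $w_i > 1/2$. So the algorithm enumerates these $\Oh{n}$ candidate values of $\delta$, computes the corresponding $\gamma = (\delta - \eps)/(1+2\delta)$, checks the inequality $\height{h}(W_\delta) \leq \gamma$, and proceeds with any such value.

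Next I would run the two-bin construction. Bin~1 receives $H_\gamma$ as a left-aligned stack of total width $\width{w}(H_\gamma) \leq 1$ (bounded because an optimal packing into a single bin exists), and the BCS algorithm of Corollary~\ref{thm:PTAS} is invoked on $I \setminus H_\gamma$ with target rectangle $(1-\width{w}(H_\gamma), 1)$ and accuracy $\eps$. Lemma~\ref{lem:OPTsmallHeight} guarantees $\opt_{(1-\width{w}(H_\gamma),1)}(I\setminus H_\gamma) \geq \Vol{I\setminus H_\gamma} - 2\gamma$, so the set $T$ of items that remain unpacked after bin~1 satisfies $\Vol{T} \leq 2\gamma + \eps$. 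In bin~2, I would stack the $\delta$-wide items of $T$ at the bottom with total height $\height{h}' \leq \gamma$, and then apply Steinberg's algorithm to $T' = T \setminus W_\delta$ in the remaining rectangle $(1, 1-\height{h}')$.

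The main obstacle — and really the only nonroutine part — is verifying the preconditions of Theorem~\ref{thm:Steinberg} for this final Steinberg call. The two maximum-dimension bounds are immediate from the definition of $T'$ (widths $\leq 1-\delta$, heights $\leq 1-\gamma \leq 1-\height{h}'$). The area condition requires bounding $\Vol{T'} \leq \Vol{T} - (1-\delta)\height{h}' \leq 2\gamma + \eps - \height{h}' + \delta\height{h}'$ and comparing it against $ab - (2w_{\max}-a)_+(2h_{\max}-b)_+$. I would expand this latter expression using the tight worst-case estimates $w_{\max} \leq 1-\delta$, $h_{\max} \leq 1-\gamma$, obtaining $2(\delta + \gamma - 2\delta\gamma - \height{h}' + \delta\height{h}')$. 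The defining relation $\gamma(1+2\delta) = \delta - \eps$ rewrites as $\delta = \gamma + \eps + 2\delta\gamma$, which is exactly what is needed to match the two sides, yielding the required inequality $2\Vol{T'} \leq ab - (2w_{\max}-a)_+(2h_{\max}-b)_+$.

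Finally I would note that BCS runs in polynomial time by Corollary~\ref{thm:PTAS}, Steinberg's algorithm runs in $\Oh{(n\log^2 n)/\log\log n}$ by Theorem~\ref{thm:Steinberg}, and the outer enumeration contributes only an $\Oh{n}$ factor, so the whole procedure is polynomial for fixed $\eps$, completing the proof.
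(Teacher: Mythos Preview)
Your proposal is correct and follows essentially the same approach as the paper: the paper's proof of this lemma is precisely the construction you describe (stack $H_\gamma$ plus BCS in bin~1, stack $T\cap W_\delta$ plus Steinberg in bin~2), with the same area bookkeeping, the same expansion of the Steinberg term to $2(\delta+\gamma-2\delta\gamma-\height{h}'+\delta\height{h}')$, the same use of the identity $\gamma(1+2\delta)=\delta-\eps$, and the same enumeration of $\delta = 1-w_i$ over wide items to locate a suitable $\delta$.
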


\subsection{Using an area bound for the wide and high items}

In this section we describe how to derive a guarantee on the total area of the wide and high items for the instances that cannot be solved by Lemma~\ref{lem:smallheight}.
Consider a unit-sized bin with the lower left corner at the origin of a cartesian coordinate system and consider the stack of wide items ordered by non-increasing width and aligned with the lower right corner of the bin. If there exists a $\delta \in (\eps, 1/2]$ such that $\height{h}(W_{\delta}) \leq (\delta-\eps)/(1+2\delta)$ then we use the algorithm of Lemma~\ref{lem:smallheight} to pack the instance into two bins--see Figure~\ref{fig:LowerBoundWcase1}. Otherwise the stack of wide items exceeds the function $f(x) = (x-\eps)/(1+2x)$ for $x \in (\eps, 1/2]$--see Figure~\ref{fig:LowerBoundWcase2}. Then for $\eps \leq 1/200$ and $\xi = 0.075$ the total area of the wide items is 
\begin{align} 
\Vol{W} & \geq \int_{\eps}^{1/2} \frac{x-\eps}{1+2x} dx + \frac{\height{h}(W)}{2} \nonumber \\
& \geq \bigg[\frac{1}{4}\Big(2x-(2\eps+1)\log(2x+1)+1\Big)\bigg]_{\eps}^{1/2} + \frac{\height{h}(W)}{2} \nonumber \\
& \geq \xi + \frac{\height{h}(W)}{2}. \nonumber 
\end{align}
Furthermore, for $\delta = 1/2$ we have
\begin{align}
\height{h}(W) > \frac{1/2 - \eps}{1+2 \cdot 1/2} = \frac{1}{4} - \frac{\eps}{2}. \label{eqn:LowerBoundTotalH}
\end{align}

\begin{figure} [!tbp]
    \centering
     \subfigure[If there exists a $\eps < \delta \leq 1/2$ such that $H(W_{\delta}) \leq (\delta-\eps)/(1+2\delta)$ then use Lemma~\ref{lem:smallheight} to pack the instance into two bins. \label{fig:LowerBoundWcase1}]{
     \includegraphics[width=.45\textwidth]{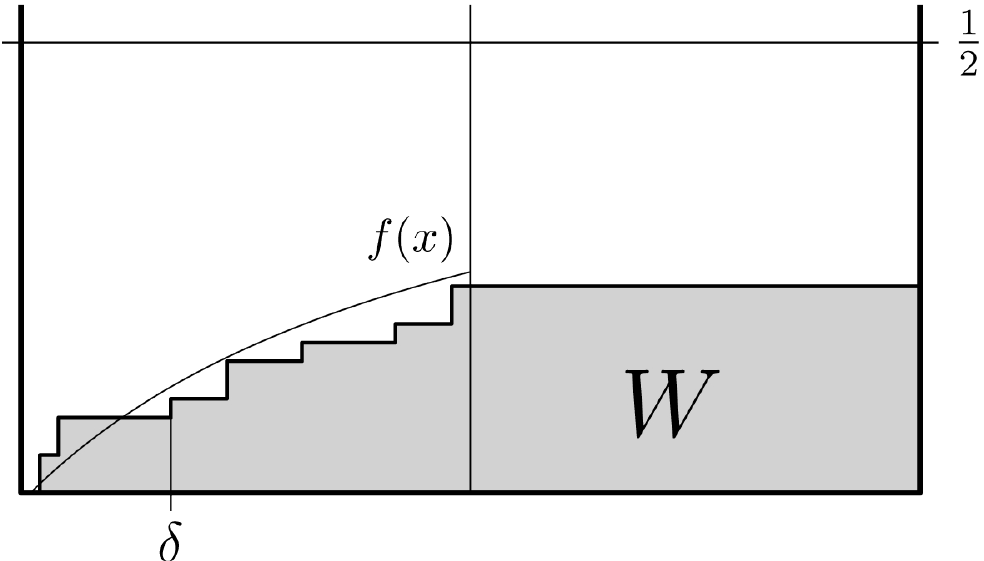}
     }
     \hspace{.05\textwidth}
     \subfigure[Otherwise we get $\Vol{W} \geq \xi + \height{h}(W)/2$ for $\xi = 0.075$. \label{fig:LowerBoundWcase2}]{
      \includegraphics[width=.45\textwidth]{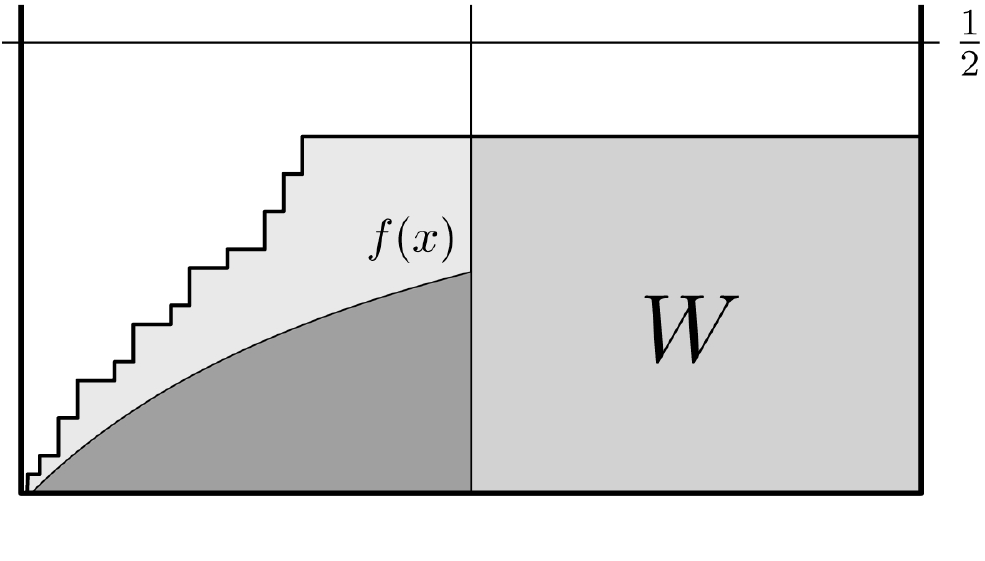}
      }
  \caption{Deriving the lower bound for the area of the wide items. \label{fig:LowerBoundW}}
\end{figure}

Obviously, we can apply Lemma~\ref{lem:smallheight} to the high items instead of the wide items as well. We get the following lemma.
\begin{lemma}
For any input which cannot be packed by the methods described above, we have
\begin{align*}
\Vol{W \cup H} \geq 2\xi + \frac{\width{w}(H) + \height{h}(W)}{2}.
\end{align*}\label{lem:areaofWH}
\end{lemma}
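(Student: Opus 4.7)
The plan is to derive an analogous area lower bound for the high items by symmetry, and then combine the two bounds by addition.

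First, I would transpose the reasoning that produced the wide-item bound. The inequality $\Vol{W}\geq \xi + \height{h}(W)/2$ rested on two ingredients: the identity $\Vol{W} = \height{h}(W)/2 + \int_{0}^{1/2}\height{h}(W_{\delta})\,d\delta$ for the right-aligned, non-increasingly-ordered staircase of wide items, together with the hypothesis that Lemma~\ref{lem:smallheight} cannot pack $I$ into two bins, which forces $\height{h}(W_{\delta})>f(\delta)=(\delta-\eps)/(1+2\delta)$ for all $\delta\in(\eps,1/2]$. Both ingredients have a perfect transpose: rotating the bin $90$ degrees and stacking the high items from the top in non-increasing order of height gives $\Vol{H} = \width{w}(H)/2 + \int_{0}^{1/2}\width{w}(H_{\delta})\,d\delta$, and the symmetric version of Lemma~\ref{lem:smallheight} (observed by the authors just before this lemma) guarantees that, when it cannot be applied, $\width{w}(H_{\delta})>f(\delta)$ for every $\delta\in(\eps,1/2]$. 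Integrating exactly as in the wide case then yields $\Vol{H}\geq \xi + \width{w}(H)/2$.

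Adding the two bounds immediately gives $\Vol{W}+\Vol{H}\geq 2\xi + (\height{h}(W)+\width{w}(H))/2$. If $W$ and $H$ are disjoint then $\Vol{W\cup H}=\Vol{W}+\Vol{H}$ and the lemma follows at once.

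The main obstacle is therefore the handling of the set $W\cap H$, i.e.\ the big items, whose area is counted once in $\Vol{W\cup H}$ but twice in the summation above. Since $\Opt{I}=1$, at most one big item $r_b$ exists, so the discrepancy to absorb is at most $w_b h_b$ for a single rectangle. I expect this to be bridged by sharpening the staircase estimate: a big item contributes its full height $h_b$ to $\height{h}(W_\delta)$ on the interval $\delta\in(1-w_b,1/2]$, which typically lies well above the curve $f$, so the integral picks up a bonus beyond $\xi$; symmetrically for $\Vol{H}$. Verifying that the two bonuses jointly dominate $w_b h_b$, and combining this with an explicit treatment of the easy subcase where $r_b$ is essentially the only large item, should be the technical crux of the proof.
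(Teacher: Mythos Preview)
Your approach—derive the symmetric bound $\Vol{H}\geq \xi + \width{w}(H)/2$ by transposing the staircase argument, then add it to the bound for $W$—is exactly what the paper does. The paper's ``proof'' consists of the single sentence ``Obviously, we can apply Lemma~\ref{lem:smallheight} to the high items instead of the wide items as well,'' after which the lemma is stated with no further argument.

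You are actually more careful than the paper here. The double-counting of a possible big item $r_b\in W\cap H$, which you flag as the main obstacle, is not mentioned in the paper at all; read literally, the paper only establishes $\Vol{W}+\Vol{H}\geq 2\xi+(\height{h}(W)+\width{w}(H))/2$ and silently passes to $\Vol{W\cup H}$. Your proposed remedy (extracting a bonus from the integrals because $h_b>1/2\gg f(\delta)$ on $(1-w_b,1/2]$, and symmetrically for $H$) is the natural line of attack, but note that the big item's own surplus is not by itself enough: one also needs the slack coming from the non-big wide items on $(1-w_b,1/2]$ (they are present because $h(W'_\delta)$ is non-decreasing and must already exceed $f(1-w_b)$ at $\delta=1-w_b$, where $r_b\notin W_\delta$). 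In any case, your write-up already matches, and in rigor exceeds, what the paper provides for this lemma.
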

It is crucial for our work that we get this additional area guarantee of $2\xi = 0.15$ on top of the trivial guarantee of $\width{w}(H)/2 + \height{h}(W)/2$ in the previous lemma. 

Let us recapitulate the general idea of the algorithm. If $\height{h}(W_{\delta}) \leq (\delta  - \eps)/(1+2\delta)$ or $\width{w}(H_{\delta}) \leq (\delta  - \eps)/(1+2\delta)$ for some $\delta \in (\eps, 1/2]$ we pack $I$ in two bins using the methods from Lemma~\ref{lem:smallheight}. Otherwise we rotate the instance such that we have $\height{h}(W) \geq \width{w}(H)$ and use the area guarantee of Lemma~\ref{lem:areaofWH} to apply different methods for $\width{w}(H) > 1/2$ and $\width{w}(H) \leq 1/2$. In all cases we are able to pack into at most two bins. Before we can give the main lemmas to solve both cases above we need some preparations.

In the following we show the existence of a packing of the wide items and of high items with at least half of their total width of the high items with a nice structure, i.e., such that the wide and the high items are packed in stacks in different corners of the bin. We later use this observation to approximate such a packing.
\begin{lemma}\label{lem:wideandhigh}
For sets $W$ and $H$ of wide and high items with $\Opt{W \cup H} = 1$ there exists a packing of $W \cup H^*$ with $H^* \subseteq H$ and $\width{w}(H^*) \geq \width{w}(H)/2$ such that the wide items are stacked in the bottom right and the items from $H^*$ are stacked in the top left corner of the bin.
\end{lemma}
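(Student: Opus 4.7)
The plan is to rearrange an optimal one-bin packing of $W \cup H$ into the required corner configuration, discarding at most half of the total width of $H$. I begin by recording two structural facts valid in any unit-bin packing: no two wide items can share a horizontal strip (each has width exceeding $1/2$), so the wide items are totally ordered by $y$-coordinate, and dually the high items are totally ordered by $x$-coordinate.

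Fix an optimal packing and classify each $r_h \in H$ according to the wide items whose $y$-range meets that of $r_h$: let $H_L$ contain the high items for which every such wide item lies horizontally to the right, $H_R$ the symmetric class on the left, and $H_0$ those with no $y$-overlapping wide item. A high item with wide items on both sides in its $y$-range is impossible, since this forces $w_{w_1} + w_h + w_{w_2} \leq 1$ with $w_{w_1}, w_{w_2} > 1/2$, a contradiction. Hence $H = H_L \cup H_R \cup H_0$, and by averaging $\width{w}(H_L \cup H_0) + \width{w}(H_R \cup H_0) \geq \width{w}(H)$. Reflecting the bin horizontally if necessary, I take $H^* := H_L \cup H_0$ with $\width{w}(H^*) \geq \width{w}(H)/2$.

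For the packing itself, I would stack the wide items flush with the bottom-right corner (sorted by decreasing width from the bottom) and the items of $H^*$ flush with the top-left corner (sorted by decreasing height from the left). Non-overlap then splits by the class of $r_h \in H^*$. For $r_h \in H_0$, the absence of any $y$-overlapping wide item in the original forces all wide items into $[0, y_h] \cup [y_h + h_h, 1]$, so $\height{h}(W) \leq 1 - h_h$; the wide stack's top in the new packing is then at or below the bottom of $r_h$, giving $y$-separation regardless of $x$. For $r_h \in H_L$, each wide item originally $y$-overlapping $r_h$ satisfies $w_w + w_h \leq 1 - x_h \leq 1$ because $r_h$ sits to its left, and each wide item not originally $y$-overlapping $r_h$ has its $y$-range disjoint from that of $r_h$ in $[0,1]$, yielding $h_w + h_h \leq 1$.

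The main obstacle is converting these pairwise inequalities into a global non-overlap of the two staircase stacks, since the prefix sums in each stack can accumulate. The crucial observation is that the items of $H_L$ that originally $y$-overlap a common wide item $r_w$ are all horizontally to the left of $r_w$, so their total width is at most $1 - w_w$; combined with the decreasing-height sort, which places those items of $H^*$ that dip deepest into the wide stack's $y$-range on the left of the stack, this lets a sweep from the top edge downward propagate the pairwise bounds to the cumulative sums. Any pair $(r_w, r_h)$ that would overlap in the new packing would then contradict both pairwise options available from the optimum, closing the argument.
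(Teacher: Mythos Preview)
Your classification of high items by the side on which their $y$-overlapping wide items lie is sound, and the impossibility of wide items on both sides is correct: since $r_h$ has a fixed $x$-position, a wide item to its left forces $w_{w_1} \le x_h$ and one to its right forces $w_{w_2} \le 1 - x_h - w_h$, summing to $w_{w_1}+w_h+w_{w_2}\le 1$. This is close in spirit to the paper's ``nearest side'' association; indeed, any left-associated high item that $y$-overlaps some wide item must lie entirely in the left half, so the paper's $H^*$ is a subset of your $H_L\cup H_0$.

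The genuine gap is the final paragraph. You correctly isolate the obstacle: the pairwise inequalities $w_w+w_h\le 1$ (for $y$-overlapping pairs in the original) and $h_w+h_h\le 1$ (for non-overlapping pairs) do not by themselves rule out intersection of the two sorted staircases, because an overlap in the new packing is governed by prefix sums $\sum_{h_{h'}\ge h_h} w_{h'}$ and $\sum_{w_{w'}\ge w_w} h_{w'}$ that aggregate many items. Your proposed fix---bounding the total width of $H_L$-items that $y$-overlap a common $r_w$ by $1-w_w$ and then invoking a ``sweep''---names plausible ingredients but does not assemble them: for a conflicting pair $(r_w,r_h)$ in the new packing you would still have to relate the set $\{r_{h'}\in H^*: h_{h'}\ge h_h\}$ to the set of high items $y$-overlapping one particular wide item in the original, and you do not say which wide item, nor why the resulting bound controls the prefix sum you need.

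The paper sidesteps this entirely with a constructive argument. After pushing $H^*$ left and $W$ right (feasible exactly by your pairwise observation), the wide items form \emph{two} stacks, one at the top and one at the bottom of the bin, both right-aligned and sorted by width. The paper then transfers wide items one at a time from the top stack to the bottom stack, inserting each at its width-sorted position and sliding the high items upward just enough to absorb any local overlap. Feasibility is verified for a single insertion (see Figure~\ref{fig:Insertion_Process}) rather than for the final global configuration, so the cumulative-sum difficulty never arises. This incremental rearrangement is the step your proposal is missing.
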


\begin{proof}
Consider a packing of wide items $W$ and high items $H$ into a bin. Associate each high item with the nearest side of the bin (an item that has the same distance to both sides of the bin can be associated with an arbitrary side). Assume w.l.o.g. that the total width of the items associated with the left is at least as large as the total width of the items associated with the right. Remove the items that are associated with the right and denote the other high items by $H^*$. Push the items of $H^*$ together into a stack that is aligned with the left side of the bin by moving them purely horizontal and move the wide items such that they are aligned with the right side of the bin and form stacks at the bottom and top side of the bin. Order the stacks of the wide items by non-increasing order of width and the stack of the high items by non-increasing order of height. See left part of Figure~\ref{fig:Insertion_Process}.

Now apply the following process--see also Figure~\ref{fig:Insertion_Process}. Take the shortest item with respect to the width from the top stack of the wide items and insert it at the correct position into the bottom stack, i.e., such that the stack remains in the order of non-increasing width. Move the high items upwards if this insertion causes an overlap. Obviously this process moves all wide items to the bottom and retains a feasible packing.
In the end, all wide items form a stack in the bottom right corner of the bin. Move the high items upwards such that they form a stack in the top left corner of the bin.\qed
\end{proof}
\begin{figure}  [!tbp]
  \centering
  \includegraphics[width=.65\textwidth]{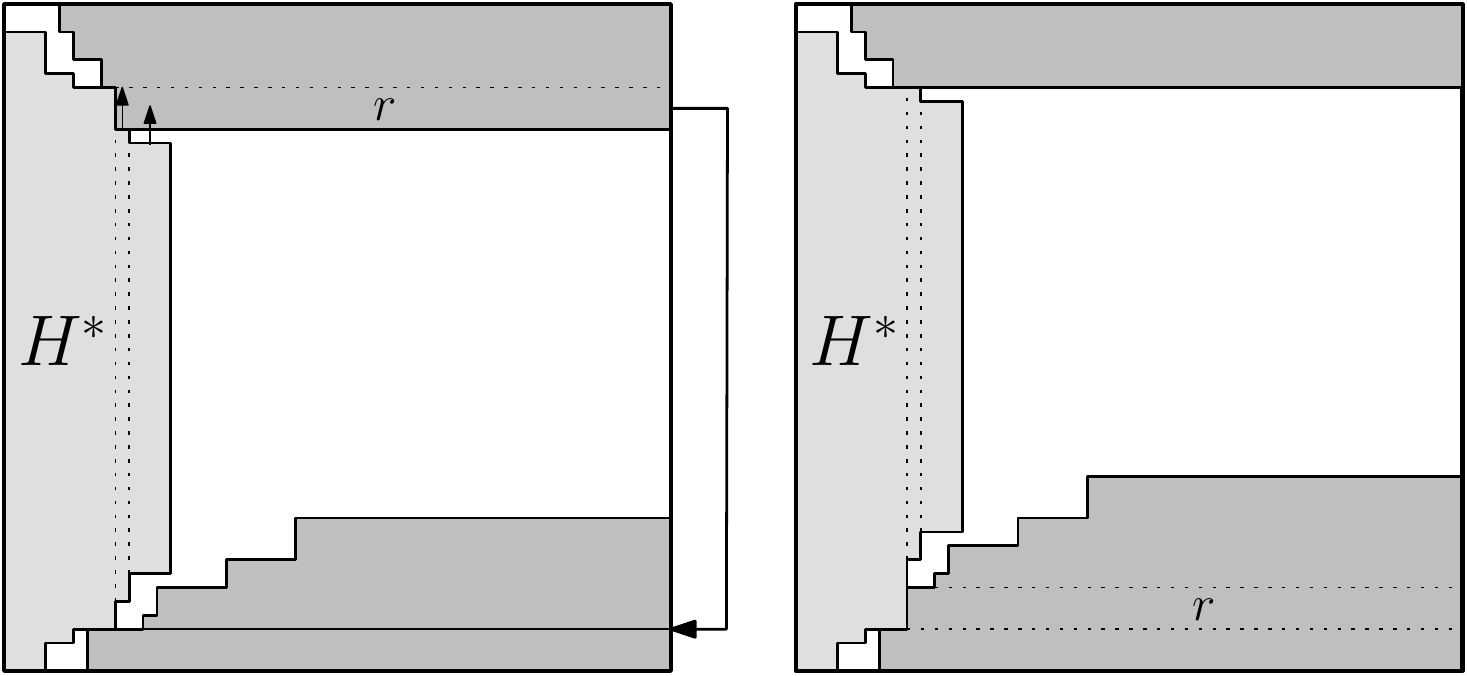}
  \caption{Inserting item $r$ into the bottom stack and moving the high items upwards\label{fig:Insertion_Process}}
\end{figure}
We use this lemma to find a packing of the wide items together with high items of almost half their total width.

\begin{lemma}\label{lem:PackWH}
For any fixed $\eps > 0$, there exists a polynomial-time algorithm that, given sets $W$ and $H$ of wide and high items with $\Opt{W \cup H}Ê= 1$, returns a packing of $W \cup H'$ into a bin with $H' \subseteq H$ and $\width{w}(H') > \width{w}(H)/2 -\eps$.
\end{lemma}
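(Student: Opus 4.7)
The plan is to algorithmically realize, up to an additive $\eps$, the structural existence statement of Lemma~\ref{lem:wideandhigh}. That lemma already fixes the geometric template for us: pack the items of $W$ as a stack in the bottom-right corner sorted by non-increasing width, and pack a suitable subset of $H$ as a horizontal sequence of items at the top of the bin, each with its top edge at $y=1$. Our algorithmic task therefore reduces to selecting and placing a subset $H' \subseteq H$ of near-maximum total width that fits inside the staircase-shaped free region lying above the wide stack.

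First, pack $W$ exactly as in Lemma~\ref{lem:wideandhigh}. This produces a monotone non-decreasing step function $H_W(x) = \sum_{r_i \in W,\, w_i \geq 1-x} h_i$ describing the height of the wide stack at horizontal coordinate $x$. A high item of height $h$ placed at horizontal interval $[x_L, x_R]$ with its top edge at $y=1$ is feasible if and only if $H_W(x_R) \leq 1-h$, since $H_W$ is non-decreasing. Hence, once a left-to-right ordering of the selected high items is fixed, the feasibility of the $i$-th item depends solely on the prefix sum $\sum_{j \leq i} w^h_j$ (its right edge) and on its own height $h^h_i$.

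A routine adjacent-swap argument, exploiting the monotonicity of $H_W$, shows that any feasible subset stacked along the top-left can be reordered by non-increasing height without losing feasibility: pushing the taller items to the smaller-$x$ positions, where $H_W$ is smallest, can only weaken the constraints. Hence we may sort $H$ once by non-increasing height and reduce the packing problem to the following one-dimensional selection problem: choose $H' \subseteq H$ to maximize $\width{w}(H')$ subject to $\sum_{j \in H',\, j \leq i} w^h_j \leq x(h^h_i)$ for every $i \in H'$, where $x(h) := \max\{x : H_W(x) \leq 1-h\}$ is a threshold computable from the wide stack.

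This one-dimensional problem admits an FPTAS by standard techniques. Round every width of a high item down to a multiple of $\eps/n$ and run a dynamic program whose state is the pair (index in the sorted order, rounded total width selected so far); the number of states is $\Oh{n^2/\eps}$, so the DP runs in polynomial time and returns a subset $H'$ whose total width is within $\eps$ of the structured optimum. Combined with the guarantee from Lemma~\ref{lem:wideandhigh} that the structured optimum is at least $\width{w}(H)/2$, we obtain $\width{w}(H') > \width{w}(H)/2 - \eps$, and the associated structured packing of $W \cup H'$ is produced by the algorithm. The main technical obstacle is the swap argument justifying the sort-by-height reduction; once that is in place, the DP with width rounding and the accounting of the $\eps$ loss are routine.
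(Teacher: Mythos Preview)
Your approach is correct and takes a genuinely different route from the paper. Both proofs start from the structured target of Lemma~\ref{lem:wideandhigh} (wide stack bottom-right, high stack top-left, each sorted), but they approximate the high stack differently. The paper splits $H$ by width at the threshold $\eps$: since $\Opt{W\cup H}=1$, any two high items are horizontally disjoint, so there are at most $1/\eps$ high items of width $\geq\eps$; the algorithm enumerates all $2^{1/\eps}$ subsets of those, guesses $H^*_{\geq\eps}$ exactly, and then inserts the remaining items of width $<\eps$ greedily by non-increasing height. Each rejection certifies that the current stack is within one thin item's width of $H^*$ at that height level, which yields $\width{w}(H')>\width{w}(H^*)-\eps$ directly, with no DP and no rounding. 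Your reduction to a one-dimensional deadline-scheduling problem (maximize total selected width subject to prefix-sum thresholds $x(h_i)$) and the adjacent-swap argument justifying the height ordering are both sound, and an FPTAS for that problem does exist. One caution, though: rounding widths \emph{down} and checking feasibility against the rounded prefix sums can return a set whose true prefix sums overshoot the thresholds, i.e., a set that is not actually packable. You need either to round up, or to keep in each DP state the minimum true prefix sum compatible with the rounded total; this is indeed routine, but the direction of rounding is not innocent here. In short, the paper's enumeration-plus-greedy is lighter because it exploits that wide high items are few, while your DP is more uniform but carries the usual FPTAS bookkeeping.
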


\begin{proof}
By Lemma~\ref{lem:wideandhigh}, a packing of $W\cup H^*$ exists. Let $H^*_{\geq \eps}Ê= \{ r_i \in H^* \mid w_i \geq \eps\}$ and $H^*_{< \eps}Ê= \{ r_i \in H^* \mid w_i < \eps\}$. We approximate $H^* = H^*_{\geq \eps} \cup H^*_{< \eps}$ as follows.

First, pack the items of $W$ in a stack by non-increasing order of width and align this stack with the bottom right corner of the bin. Second, guess the set $H^*_{\geq \eps}$. By guessing we mean that we enumerate all subsets of $\{r_i \in H \mid w_i \geq \eps\}$ (which is possible as $|\{r_i \in H \mid w_i \geq \eps\}| \leq 1/\eps$) and apply the remainder of this algorithm on all these sets. As we eventually consider $H^*_{\geq \eps}$, we assume that we can guess this set. Pack $H^*_{\geq \eps}$ into a stack by non-increasing order of height and align this stack with the top left corner of the bin. Third, we approximate $H^*_{< \eps}$ by greedily inserting items from $H_{< \eps} = \{r_i \in H \mid w_i < \eps\}$ into this stack. To do this, start with $H' = H_{\geq \eps}^*$. Now sort the items of $H_{< \eps}$ by non-increasing order of height and for each item try to insert it into the stack (at the correct position to preserve the order inside the stack). If this is possible, the item is added to $H'$. 

Assume that $H_{< \eps} = \{r_1, \ldots, r_m\}$ with $h_1 \leq \cdots \leq h_m$. 
Let $v_i = \width{w}(\{r_j \in H' \mid h_j \geq h_i\})$ and $v_i^* = \width{w}(\{r_j \in H^* \mid h_j \geq h_i\})$. Whenever an item $r_i$ is not inserted in the stack we have $v_i > v_i^* - w_i$. To see this, assume that $v_i^* \geq v_i + w_i$. This means that the substack from $H^*$ of items of height at least $h_i$ has width larger than the substack of items of height at least $h_i$ from the stack of $H'$ \emph{plus} the width of $r_i$. Thus $r_i$ does not cause a conflict. Now it is easy to see by induction that $\width{w}(H') > \width{w}(H^*) - \eps$ at the end.
\qed
\end{proof}

With these preparations, the following lemma is easy to show.

\begin{lemma} \label{lem:largeW}
Let $\eps > 0$ and let $I$ be an instance with $\Opt{I}Ê= 1$, $\height{h}(W) \geq \width{w}(H)  > 1/2$, and $\height{h}(W_{\delta}) > (\delta  - \eps)/(1+2\delta)$ and $\width{w}(H_{\delta}) > (\delta  - \eps)/(1+2\delta)$ for all $\delta \in (\eps, 1/2]$. There exists a polynomial-time algorithm that returns a packing of $I$ into two bins.
\end{lemma}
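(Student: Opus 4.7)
The plan is to pack $I$ into two bins as follows. First, apply Lemma~\ref{lem:PackWH} to pack $W$ together with a subset $H' \subseteq H$ satisfying $\width{w}(H') > \width{w}(H)/2 - \eps$ into the first bin. Since any big item lies in $W \cap H$ and is therefore packed with $W$ in bin~1, every item of $H'' := H \setminus H'$ has width at most $1/2$, and it remains to pack $H'' \cup S$ into the second bin.

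To pack the second bin, I would stack the items of $H''$ side by side against the left edge, occupying a region of width $w_2 := \width{w}(H'') < \width{w}(H)/2 + \eps$, and then apply Steinberg's algorithm (Theorem~\ref{thm:Steinberg}) to the remaining rectangle of dimensions $(1-w_2)\times 1$ with the items of $S$. Since $h_{\max}(S) \leq 1/2$, the factor $(2 h_{\max}(S) - 1)_+$ vanishes and Steinberg's area precondition reduces to $2\Vol{S} \leq 1 - w_2$. Combining $\Vol{S} \leq 1 - \Vol{W \cup H}$ with the area guarantee $\Vol{W \cup H} \geq 2\xi + (\width{w}(H)+\height{h}(W))/2$ from Lemma~\ref{lem:areaofWH}, the estimate $w_2 < \width{w}(H)/2 + \eps$, and the hypotheses $\height{h}(W) \geq \width{w}(H) > 1/2$ with $\xi = 0.075$ and $\eps \leq 1/200$, the required inequality reduces to
\[
\width{w}(H)/2 + \height{h}(W) \;\geq\; 1 - 4\xi + \eps \;=\; 0.7 + \eps,
\]
which is comfortably satisfied because $\width{w}(H)/2 + \height{h}(W) > 1/4 + 1/2 = 3/4$.

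The main obstacle will be Steinberg's width condition $w_{\max}(S) \leq 1 - w_2$: because Lemma~\ref{lem:PackWH} has an additive $\eps$-loss, $w_2$ can be slightly larger than $1/2$, in which case small items of width close to $1/2$ do not fit into the right sub-rectangle of the second bin. I expect to handle this edge case by observing that $w_2 > 1/2$ forces $\width{w}(H) > 1 - 2\eps$, and then invoking the \emph{exact} structural bound $\width{w}(H \setminus H^*) \leq \width{w}(H)/2$ provided by Lemma~\ref{lem:wideandhigh} to relocate a few $\eps$-narrow high items from $H''$ back into the $H'$-stack in bin~1 without disturbing its staircase structure, reducing $w_2$ to at most $1/2$ and restoring all of Steinberg's preconditions.
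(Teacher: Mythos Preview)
Your overall plan coincides with the paper's proof: pack $W\cup H'$ in bin~1 via Lemma~\ref{lem:PackWH}, stack $H''=H\setminus H'$ against the left side of bin~2, and feed the remaining small items to Steinberg in the strip of width $1-w_2$. Your area calculation is correct (and in fact slightly cleaner than the paper's chain of inequalities).

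The only place you diverge is the ``main obstacle'', and there you make it much harder than it is. Your proposed fix --- relocating $\eps$-narrow high items from $H''$ back into the $H'$-staircase in bin~1 --- is not obviously sound: the greedy insertion in the proof of Lemma~\ref{lem:PackWH} already tried every item of $H_{<\eps}$, and any item it rejected was rejected precisely because inserting it would collide with the $W$-stack. So ``relocating'' such an item would in general destroy the feasibility of bin~1, and Lemma~\ref{lem:wideandhigh} is only existential, so it does not hand you a concrete extra item that fits. As written, this step is a gap.

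Fortunately the obstacle is illusory, and you already have all the pieces. You observed that $w_2>1/2$ forces $\width{w}(H)>1-2\eps$. Since $\height{h}(W)\ge \width{w}(H)$, Lemma~\ref{lem:areaofWH} then gives
\[
\Vol{W\cup H}\ \ge\ 2\xi+\frac{\width{w}(H)+\height{h}(W)}{2}\ \ge\ 2\xi+1-2\eps\ >\ 1,
\]
contradicting $\Opt{I}=1$. Hence $w_2\le 1/2$ always holds, $w_{\max}(S)\le 1/2\le 1-w_2$, and Steinberg applies directly. This one-line contradiction is exactly how the paper disposes of the case; replace your relocation paragraph with it and the proof is complete.
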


\begin{proof}
Use Lemma~\ref{lem:PackWH} to pack $W \cup H'$ with $H' \subseteq H$ and $\width{w}(H') > \width{w}(H)/2 - \eps$ in the first bin. Build a stack of the remaining high items $H\setminus H'$ and align it with the left side of the second bin. The width of this stack is $\width{w}(H\setminus H') < \width{w}(H)/2 + \eps$. Note that $\width{w}(H\setminus H') \leq 1/2$, as otherwise $\height{h}(W) \geq \width{w}(H) \geq 1-2\eps$, which gives a contradiction since by Lemma~\ref{lem:areaofWH} we would then have $\Vol{W \cup H} \geq 2\xi + (\width{w}(H) + \height{h}(W))/2 \geq 2\xi + 1-2\eps > 1$.
Pack the remaining items $T$ with Steinberg's algorithm in the free rectangle of size $(a,b)$ with $a = 1-\width{w}(H\setminus H')$ and $b=1$ to the right of the second bin. This is possible since $w_{\max}(T) \leq 1/2 \leq 1-\width{w}(H\setminus H')$, $h_{\max}(T) \leq 1/2$ and with Lemma~\ref{lem:areaofWH} we have 
\begin{align*}
2\Vol{T} & \leq 2\Big(1-2\xi -\frac{\width{w}(H) + \height{h}(W)}{2}\Big) \\
& \leq 2 - 4\xi - \frac{\width{w}(H)}{2} - \frac{\width{w}(H)}{2} - \frac{1}{2} && \textrm{as $\height{h}(W)Ê\geq 1/2$} \\
& < \frac{3}{2}- 4\xi - \width{w}(H\setminus H')+ \eps - \frac{1}{4} && \textrm{as $\frac{\width{w}(H)}{2} \geq\width{w}(H\setminus H') -\eps$ and $\frac{\width{w}(H)}{2} > \frac{1}{4}$} \\
& < 1-\width{w}(H\setminus H')&& \textrm{as $4\xi > \frac{1}{4} + \eps$} \\
& = ab - (2w_{\max}-a)_+(2h_{\max}-b)_+ \ && \textrm{as $2h_{\max} - b \leq 0$.}
\end{align*}
\qed
\end{proof}

In the following we assume that $\width{w}(H)Ê\leq 1/2$ as otherwise we could pack the instance into two bins with the algorithms of Lemma~\ref{lem:smallheight} or Lemma~\ref{lem:largeW}. Furthermore, we still have our initial assumption $\height{h}(W)Ê\geq \width{w}(H)$. The following lemma shows that certain sets of small items can be packed together with the set of wide items. 

\begin{lemma}\label{lem:partition}
Any set $T = \{r_1, \ldots, r_m\}$ where $r_i = (w_i, h_i)$ with $w_i \leq 1/2$, $h_i \leq 1-\height{h}(W)$ for $i=1, \ldots, m$ and total area $\Vol{T} \leq 1/2 -\height{h}(W)/2$ can be packed together with $W$.
\end{lemma}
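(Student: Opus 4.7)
The plan is to partition the bin into two horizontal strips and pack $W$ in the bottom strip as a trivial stack, then invoke Theorem~\ref{thm:Steinberg} (Steinberg) to pack $T$ in the top strip. Since we are in the regime $\Opt{I}=1$, all wide items must lie in a single bin in any feasible packing, so in particular $\height{h}(W) \leq 1$. Place the items of $W$ one on top of the other, left-aligned, at the bottom of the bin: they fit because each $w_i \leq 1$ and their heights sum to $\height{h}(W)\leq 1$. This leaves a free rectangle $R=(a,b)$ with $a=1$ and $b=1-\height{h}(W)$ at the top.

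Next I would apply Steinberg's algorithm to pack $T$ into $R$. I need to verify the three hypotheses of Theorem~\ref{thm:Steinberg}:
\begin{itemize}
\item $w_{\max}(T)\leq 1/2\leq 1 = a$, directly from the assumption $w_i\leq 1/2$.
\item $h_{\max}(T)\leq 1-\height{h}(W) = b$, directly from the assumption $h_i\leq 1-\height{h}(W)$.
\item Since $w_{\max}(T)\leq 1/2$, we have $2w_{\max}(T)-a \leq 0$ and so $(2w_{\max}(T)-a)_+(2h_{\max}(T)-b)_+ = 0$. The area condition therefore reduces to $2\Vol{T}\leq ab = 1-\height{h}(W)$, which is exactly the hypothesis $\Vol{T}\leq 1/2-\height{h}(W)/2$.
\end{itemize}
All three conditions hold, so Steinberg's algorithm packs $T$ into $R$, giving a combined packing of $T\cup W$ into a single bin. \qed

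There is no real obstacle here: once one notices that $w_{\max}(T)\leq 1/2$ exactly kills the correction term in Steinberg's area inequality, the stated area bound $1/2 - \height{h}(W)/2$ is precisely what Steinberg needs for the sub-bin $(1,\,1-\height{h}(W))$, so the lemma is essentially a direct application of Theorem~\ref{thm:Steinberg}.
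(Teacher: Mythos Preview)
Your proof is correct and follows essentially the same approach as the paper: stack $W$ at the bottom, then apply Steinberg's algorithm in the remaining $(1,\,1-\height{h}(W))$ rectangle, observing that $2w_{\max}(T)-a\leq 0$ kills the correction term so the area bound reduces exactly to $2\Vol{T}\leq 1-\height{h}(W)$.
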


\begin{proof}
Pack $W$ into a stack of height $\height{h}(W)$ and align this stack with the bottom of the bin. Use Steinberg's algorithm to pack $T$ into the free rectangle of size $(a,b)$ with $a= 1$ and $b  = 1-\height{h}(W)$ above $W$. This is possible since $w_{\max}(T) \leq 1/2$, $h_{\max}(T) \leq 1-\height{h}(W)$ and $2\Vol{T} \leq 1 - \height{h}(W) = ab = ab - (2w_{\max}-a)_+(2h_{\max}-b)_+$ as $2w_{\max}- a \leq 0$. \qed
\end{proof}

Obviously, Lemma~\ref{lem:partition} can also be formulated such that we pack the high items together with a set of small items of total area at most $1/2 - \width{w}(H)/2$ (in this case we do not need a condition like $h_i \leq 1-\height{h}(W)$, as $\width{w}(H) \leq 1/2$ and thus all remaining items fit into the free rectangle next to the stack of $H$). This suggests partitioning the small items into sets with these area bounds in order to pack them with the wide and high items. Before we can do this, we consider two cases where we have to apply a different packing. In all cases, we pack $W$ in a stack of height $\height{h}(W)$ in the bottom right corner of the first bin and pack $H$ in a stack of width $\width{w}(H)$ in the top left corner of the second bin. 

Let $\omega$ be the greatest width in the stack of $W$ that is packed above height $1/2$ in this packing (let $\omega=1/2$ if $\height{h}(W) \leq 1/2$)--see Figure~\ref{fig:DefineOmega}. 
We consider the set $\widetilde{H} = \{r_i \mid h_i \in (1-\height{h}(W), 1/2] \}$, i.e., the set of remaining items that do not fit above the stack of the wide items (and thus violate the condition of Lemma~\ref{lem:partition}). Since $\widetilde{H} = \emptyset$ for $\height{h}(W)Ê\leq 1/2$, the following case can only occur if $\height{h}(W) > 1/2$. 

\begin{figure}  [!tbp]
  \centering
  \includegraphics[width=.4\textwidth]{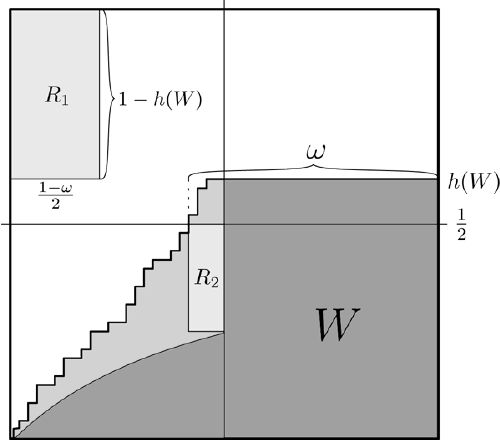}
  \caption{The greatest width in the stack of $W$ that is packed above height $1/2$ is denoted by $\omega$. The additional area guarantees for Case~1 of $\Vol{R_1} = (1-\height{h}(W))(1-\omega)/2$ and $\Vol{R_2} = (\omega-1/2)/4$ are drawn with lighter shading. The original area guarantee from Lemma~\ref{lem:areaofWH} is drawn with darker shading. \label{fig:DefineOmega}}
\end{figure}

\textbf{Case 1.} $\width{w}(\widetilde{H}) \geq (1-\omega)/2$.\\
If there is an item $r_i = (w_i,h_i) \in \widetilde{H}$ with $w_i > (1-\omega)/2$ then we pack this item in the top left corner of the first bin. It is easy to see that this is possible since $\Opt{I} = 1$. Otherwise greedily pack items from $\widetilde{H}$ into a horizontal stack in the top left corner of the first bin as long as they fit. Since all items in $\widetilde{H}$ have width at most $(1-\omega)/2$ we can pack a total width of at least $(1-\omega)/2$ before a conflict occurs. 

In both cases we packed a total area of at least $(1-\height{h}(W))(1-\omega)/2$ from $\widetilde{H}$ into the first bin--see rectangle $R_1$ in Figure~\ref{fig:DefineOmega}. Furthermore, we can use the definition of $\omega$ to improve the estimate from Lemma~\ref{lem:areaofWH}. In Lemma~\ref{lem:areaofWH} all wide items that are packed above height $1/4$ only contribute with their trivial area guarantee of half their height. Now we know that the items that are packed between height $1/4$ and $1/2$ have width at least $\omega$ and this gives an additional area of at least $(\omega-1/2) \cdot 1/4$--see rectangle $R_2$ in Figure~\ref{fig:DefineOmega}.
Thus we packed an overall area of 
\begin{align*}
A & \geq 2\xi + \frac{\width{w}(H) + \height{h}(W)}{2}Ê+ \frac{\omega-1/2}{4} + \frac{(1-\height{h}(W))(1-\omega)}{2} \\
& = \frac{3}{8} + 2\xi + \frac{\width{w}(H)}{2} + \omega \Big(\frac{\height{h}(W)}{2}Ê- \frac{1}{4}\Big) \\
& \geq \frac{3}{8} + 2\xi + \frac{\width{w}(H)}{2} && \textrm{as $\height{h}(W) \geq 1/2$ and $\omega > 0$.} 
\end{align*}
Therefore, the remaining items have total area at most $5/8 - 2\xi - \width{w}(H)/2 < 1/2 - \width{w}(H)/2$. Thus Lemma~\ref{lem:partition} allows us to pack these items together with the high items in the second bin.

Let $r', r''$ be the two largest items in $S\setminus \widetilde{H}$, i.e., among the remaining small items $S$ with $h_i \leq 1- \height{h}(W)$.

\textbf{Case 2.} $\Vol{\{r',r''\}} \geq 1/2 - 2\xi - \height{h}(W)/2$. \\
Pack $r'$ in the top left and $r''$ in the top right corner of the first bin. This is possible as the width of both items is at most $1/2$ and the height is at most $1-\height{h}(W)$ for $r',r''$. By Lemma~\ref{lem:areaofWH} and the condition for this case the total area of the packed items is
\begin{align*}
A & \geq 2\xi + \frac{\width{w}(H) + \height{h}(W)}{2} + \frac{1}{2} - 2\xi - \frac{\height{h}(W)}{2}  \geq  \frac{1}{2} + \frac{\width{w}(H)}{2}.
\end{align*}
Again we can use the method of Lemma~\ref{lem:partition} to pack the remaining items together with $H$ in the second bin.

\textbf{Case 3.} Otherwise we have $\width{w}(\widetilde{H}) < (1-\omega)/2$ and $\Vol{\{r',r''\}} < 1/2 - 2\xi - \height{h}(W)/2$. This yields that $\Vol{\widetilde{H}} < (1-\omega)/4$ and $\Vol{\{r''\}} < 1/4 - \xi - \height{h}(W)/4$ (where we assume that $\Vol{\{r'\}} \geq \Vol{\{r''\}}$. Use the following greedy algorithm to partition the remaining items into two sets that will be packed together with $W$ and $H$ using Lemma~\ref{lem:partition}. 
\begin{enumerate}
\item Create sets $S_1$ and $S_2$ with capacities $c_1 = 1/2 - \height{h}(W)/2$ and $c_2 = 1/2 - \width{w}(H)/2$, respectively,
\item add $r'$ to $S_1$ and add all items of $\widetilde{H}$ to $S_2$, \label{step:largeitems}
\item take the remaining items by non-increasing order of size and greedily add them to the set of greater remaining free capacity, i.e., to a set with maximal $c_i - \Vol{S_i}$.\label{step:greedycapacity}
\end{enumerate}
In the following we show that $\Vol{S_1} \leq c_1$ and $\Vol{S_2}Ê\leq c_2$. First note that this holds after Step~\ref{step:largeitems} since 
\begin{align*}
\Vol{\{r'\}} &< \frac{1}{2} - 2\xi - \frac{\height{h}(W)}{2} < \frac{1}{2} - \frac{\height{h}(W)}{2}Ê= c_1 \quad \textrm{ and } \\
\Vol{\widetilde{H} } &< \frac{1-\omega}{4} < \frac{1-\omega}{2} \leq \frac{1}{2} - \frac{\width{w}(H)}{2} = c_2 && \textrm{as $\omega \geq 1/2 \geq \width{w}(H)$.}
\end{align*}

Assume that after Step~\ref{step:greedycapacity} one of the sets has total area greater than its corresponding capacity. Then there is an item $r^*$ that has been added in Step~\ref{step:greedycapacity} and that violates the capacity for the first time. Since $\height{h}(W) \geq 1/4-\eps/2$ by Inequality~(\ref{eqn:LowerBoundTotalH}) and $\eps \leq 1/4$ we have $\Vol{\{r^*\}} \leq \Vol{\{r''\}} \leq 1/4 - \xi - \height{h}(W)/4 \leq 3/16 - \xi + \eps/8 < 0.15$. Assume w.l.o.g.\ that $r^*$ was added to $S_1$. Then we have $\Vol{S_1} > c_1$ and $\Vol{S_2} \geq c_2 - \Vol{r^*}$ as otherwise $r^*$ would have been added to $S_2$. Thus we have
\begin{align*}
\Vol{S_1} + \Vol{S_2} > c_1 + c_2 - \Vol{\{r^*\}} \geq c_1 + c_2 - 0.15.
\end{align*}
With Lemma~\ref{lem:areaofWH} we get the contradiction
\begin{align*}
\Vol{S_1} + \Vol{S_2}  & \leq 1- \Vol{W \cup H} \\
& \leq 1 - 2\xi - \frac{\width{w}(H) + \height{h}(W)}{2} \\
&= c_1 + c_2 - 2\xi \\
&= c_1 + c_2 - 0.15.
\end{align*}
Thus both sets do not violate their capacities and we can use the methods of Lemma~\ref{lem:partition} to pack $S_1$ together with $W$ in the first bin and $S_2$ together with $H$ in the second bin. We showed the following lemma.
\begin{lemma}\label{lem:smallW}
Let $\eps > 0$ and let $I$ be an instance with $\Opt{I}Ê= 1$, $\width{w}(H) \leq 1/2$, and $\height{h}(W_{\delta}) > (\delta  - \eps)/(1+2\delta)$ and $\width{w}(H_{\delta}) \leq (\delta  - \eps)/(1+2\delta)$ for all $\delta \in (\eps, 1/2]$. There exists a polynomial-time algorithm that returns a packing of $I$ into two bins.
\end{lemma}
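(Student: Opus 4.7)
The plan is to fill the first bin with $W$ plus some small and ``medium'' items, and the second bin with $H$ plus the rest, using Lemma~\ref{lem:partition} as the main packing subroutine in each bin. Write $c_1 = 1/2 - \height{h}(W)/2$ and $c_2 = 1/2 - \width{w}(H)/2$ for the small-item budgets above $W$ and beside $H$; by Lemma~\ref{lem:areaofWH} the total leftover area is at most $c_1 + c_2 - 2\xi$, so on average there is a slack of $2\xi = 0.15$ to exploit.

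Two obstructions prevent a direct greedy distribution. First, items with $h_i \in (1-\height{h}(W),1/2]$, call this set $\widetilde{H}$, are too tall to sit above $W$ as required by Lemma~\ref{lem:partition}. Second, a single leftover item could be almost $1/4$ in area and blow the $c_1$ budget on its own. To control the first difficulty I would introduce $\omega$ as the largest width appearing in the $W$-stack above height $1/2$ (and set $\omega = 1/2$ if $\height{h}(W) \leq 1/2$). This exposes a free rectangle of size $(1-\omega)\times(1-\height{h}(W))$ in the top-left of the first bin and, since wide items between heights $1/4$ and $1/2$ now have width at least $\omega$, sharpens the lower bound of Lemma~\ref{lem:areaofWH} by an extra $(\omega-1/2)/4$.

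I would then branch on three cases. Case~1, $\width{w}(\widetilde{H}) \geq (1-\omega)/2$: pack items of $\widetilde{H}$ into the top-left rectangle (a single too-wide item still fits because $\Opt{I}=1$; otherwise use a greedy horizontal stack); the enhanced area bound then forces the residue to have area at most $5/8 - 2\xi - \width{w}(H)/2 < c_2$, so Lemma~\ref{lem:partition} accommodates it with $H$. Case~2: the two largest items $r',r''$ of $S\setminus\widetilde{H}$ carry area at least $1/2 - 2\xi - \height{h}(W)/2$, in which case I drop them in the top corners of the first bin and again invoke Lemma~\ref{lem:partition} in the second bin. Case~3 (neither of the above): initialise $S_1 = \{r'\}$ and $S_2 = \widetilde{H}$, then assign the remaining small items in non-increasing order of area to whichever $S_i$ has more free capacity, and finally pack $S_1$ with $W$ and $S_2$ with $H$ via Lemma~\ref{lem:partition}.

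The main obstacle is verifying that the capacities are respected in Case~3. The seeding stays within $c_1,c_2$ by the case conditions (this uses $\omega \geq 1/2 \geq \width{w}(H)$). Ruling out a later overflow is the tight step: if some greedy-phase item $r^*$ were to push $\Vol{S_1}$ past $c_1$, the greedy rule forces $\Vol{S_2} \geq c_2 - \Vol{\{r^*\}}$, so $\Vol{S_1}+\Vol{S_2} > c_1+c_2-\Vol{\{r^*\}}$. Bounding $\Vol{\{r^*\}}\leq\Vol{\{r''\}}$ and plugging in $\height{h}(W) \geq 1/4-\eps/2$ from~(\ref{eqn:LowerBoundTotalH}) gives $\Vol{\{r^*\}}<2\xi$, which contradicts $\Vol{S_1}+\Vol{S_2} \leq c_1+c_2-2\xi$ supplied by Lemma~\ref{lem:areaofWH}. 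So the delicate part is calibrating $\xi = 0.075$ together with the Case~2 and Case~3 thresholds so that the three sub-cases exactly cover every possibility and their bounds mesh.
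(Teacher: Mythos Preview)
Your proposal is correct and follows essentially the same argument as the paper: the same definition of $\omega$ and $\widetilde{H}$, the same three-case split with identical thresholds, the same seeding $S_1=\{r'\}$, $S_2=\widetilde{H}$ followed by greedy-by-area in Case~3, and the same overflow contradiction via $\Vol{\{r^*\}}\leq\Vol{\{r''\}}<2\xi$ against Lemma~\ref{lem:areaofWH}. The only cosmetic difference is that the paper spells out the numerical check $\Vol{\{r^*\}}\leq 1/4-\xi-\height{h}(W)/4\leq 3/16-\xi+\eps/8<0.15$ explicitly, which you summarise.
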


This concludes our algorithm for instances $I$ with $\Opt{I} = 1$ as one of the Lemmas~\ref{lem:smallheight}, \ref{lem:largeW} and \ref{lem:smallW} can pack $I$ into two bins. In the next section we show how to handle instances with $\Opt{I} > 1$.

\section{Packing instances that fit into a constant number of bins}
\label{sec:OptConst}
In the following we describe our algorithm that packs the instances $I$ with $2 \leq \Opt{I} < k$ into $2 \Opt{I}$ bins. Let $\eps = 1/(40k^3 +2)$.  

Let $L = \{r_i \mid w_i h_i > \eps\}$ be the set of \emph{large} items and let $T = \{r_i \mid w_i h_i \leq \eps\}$ be the set of \emph{tiny} items. As defined in Section~\ref{sec:tools} we refer to items as wide ($W$), high ($H$), small ($S$) and big, according to their side lengths. Note that the terms \emph{large} and \emph{tiny} refer to the area of the items whereas the predicates \emph{big}, \emph{wide}, \emph{high} and \emph{small} are given depending on the width and height of the items. Also note that, e.g., an item can be tiny and high, or wide and big at the same time.

We guess $\ell = \Opt{I}$ and open $2\ell$ bins that we denote by $B_1, \ldots, B_\ell$ and $C_1, \ldots, C_\ell$. By \emph{guessing} we mean that we iterate over all possible values for $\ell$ and apply the remainder of this algorithm on every value. As there are only a constant number of values, this is possible in polynomial time. We assume that we know the correct value of $\ell$ as we eventually consider this value in an iteration. For the ease of presentation, we also denote the sets of items that are associated with the bins by $B_1, \ldots, B_\ell$ and $C_1, \ldots, C_\ell$. We will ensure that the set of items that is associated with a bin is feasible and a packing is known or can be computed in polynomial time. 

Let $I_i^*$ be the set of items in the $i$-th bin in an optimal solution. We assume w.l.o.g. that $\Vol{I_i^*} \geq \Vol{I_j^*}$ for $i < j$. Then we have
\begin{align}
\Vol{I} = \Vol{I_1^*} + \cdots + \Vol{I_\ell^*} & \leq \ell \cdot \Vol{I_1^*} \label{eqn:volI}
\end{align}

In a first step, we guess the assignment of the large items to bins. Using this assignment and the BCS algorithm we pack a total area of at least $\Vol{I_1^*} - \eps$ into $B_1$ and keep $C_1$ empty. This step has the purpose of providing a good area bound for the first bin (especially since it is set into relation to the fullest bin of an optimal solution in Inequality~(\ref{eqn:volI})) and leaving a free bin for later use. We use a slightly technical definition (which is given later in the details) for the profits of the items to ensure that the large items that are assigned to $B_1$ are actually packed. For all other bins we reserve $B_i$ for the wide and small items (except the big items) and $C_i$ for the high and big items for $i = 2, \ldots, \ell$. This separation enables us to use Steinberg's algorithm to pack up to half of the bins' area. In detail, the first part of the algorithm works as follows.

\begin{enumerate}
\item Guess $L_i = I_i^* \cap L$ for $i = 1, \ldots, \ell$.
\item Apply the BCS algorithm on $L_1 \cup T$ with $p_i = \Vol{r_i} (1/\eps + 1)$ for $r_i \in L_1$ and $p_i = \Vol{r_i}$ for $r_i \in T$ using an accuracy of $\eps^2/(1+2\eps)$. Assign the output to bin $B_1$ and keep an empty bin $C_1$.
\item For $i = 2, \ldots, \ell$, assign the wide and small items of $L_i$ to $B_i$ (omitting big items) and assign the high and big items of $L_i$ to $C_i$. That is $B_i = L_i \setminus H$ and $C_i = L_i \cap H$. \label{step:separation}
\item For $i=2, \ldots, \ell$, greedily add tiny wide items from $T \cap W$ by non-increasing order of width to $B_i$ as long as $\Vol{B_i} \leq 1/2$ and greedily add tiny high items from $T \cap H$ by non-increasing order of height to $C_i$ as long as $\width{w}(C_i) \leq 1$. \label{step:greedy}
\end{enumerate}

Corollary~\ref{cor:big} shows that using Steinberg's algorithm the bins $B_2, \ldots, B_\ell$ can be packed as there are no wide items and the total area is at most $1/2$. The bins $C_2, \ldots, C_\ell$ can be packed with a simple stack as they contain only high items of total width at most 1.
Observe that in Step~\ref{step:greedy} we only add to a new bin $B_i$ if the previous bins contain items of total area at least $1/2-\eps$ and we only add to a new bin $C_i$ if the previous bins contain items of total width at least $1-2\eps$ (as the width of the tiny high items is at most $2\eps$) and thus of total area at least $1/2 (1-2\eps) = 1/2-\eps$. After the application of this first part of the algorithm, some tiny items $T' \subseteq T$ might remain unpacked. Note that if $\Vol{B_\ell} < 1/2 - \eps$, then there are no wide items in $T'$ and if $\Vol{C_\ell} < 1/2 -\eps$ then there are no high items in $T'$ (as these items would have been packed in Step~\ref{step:greedy}). We distinguish different cases to continue the packing according to the filling of the last bins $B_\ell$ and $C_\ell$. Prior to this we prove the area bound for $B_1$ that we already mentioned above.

First note that Theorem~\ref{thm:generalPTAS} can be applied as $p_i/\Vol{r_i} \in \{1,1/\eps + 1\}$ for all items in $L_1 \cup T$. Recall that we used the technical definition of the profits to ensure that $L_1$ is completely packed, i.e., $B_1 \cap L = I_1^* \cap L$. This is easy to see since $p_i > 1+\eps$ for $r_i \in L_1$, whereas $p(\widetilde{T}) = \Vol{\widetilde{T}} \leq 1$ for any feasible set $\widetilde{T} \subseteq{T}$. Furthermore, for the set of packed tiny items $B_1 \cap T$ we have 
\begin{align*}
\Vol{B_1 \cap T} \geq \Vol{I_1^* \cap T}Ê- \eps
\end{align*}
since $(1/\eps + 1)\Vol{B_1 \cap L} + \Vol{B_1 \cap T} = p(B_1)$ and
\begin{align*}
p(B_1) & \geq \Big(1-\frac{\eps^2}{1+2\eps}\Big)\Opt{L_1 \cup T} - \frac{\eps^2}{1+2\eps} && \textrm{by Theorem~\ref{thm:generalPTAS}} \\
&\geq \Big(1-\frac{\eps^2}{1+2\eps}\Big)\Big[\Big(\frac{1}{\eps} + 1\Big)\Vol{I_1^* \cap L} + \Vol{I_1^* \cap T}\Big] - \frac{\eps^2}{1+2\eps} \\
& = \Big(\frac{1}{\eps} + 1\Big)\Vol{I_1^* \cap L} + \Vol{I_1^* \cap T} - \Big(\frac{1}{\eps}+2\Big) \frac{\eps^2}{1+2\eps} \\
&= \Big(\frac{1}{\eps} + 1\Big)\Vol{B_1 \cap L} + \Vol{I_1^* \cap T} - \eps.
\end{align*}
Thus we have
\begin{align}
\Vol{B_1} \geq \Vol{I_1^*}Ê- \eps. \label{eqn:B1}
\end{align}
Now we are ready to start with the case analysis.

\textbf{Case 1.} $\Vol{B_\ell} < 1/2 -\eps$ and $\Vol{C_\ell} < 1/2 - \eps$. \\
In this case $T'$ does not contain any wide or high items as these items would have been packed to $B_\ell$ or $C_\ell$. Greedily add items from $T'$ into all bins except $B_1$ as long as the bins contain items of total area at most $1/2$. After adding the items from $T'$, either all items are assigned to a bin (and can thus be packed) or each bin contains items of total area at least $1/2 - \eps$ and we packed a total area of at least
\begin{align*}
A & \geq \Vol{B_1} + (2\ell - 1) \Big(\frac{1}{2} - \eps\Big) \\
& \geq \Vol{I_1^*} + \ell - \frac{1}{2}- 2\ell\eps && \textrm{from Inequality~(\ref{eqn:B1})} \\
& \geq \Vol{I_1^*}  + \ell\Vol{I_1^*} + \ell(1-\Vol{I_1^*}) - \frac{1}{2}- 2\ell\eps \\
& \geq \ell\Vol{I_1^*} + \frac{1}{2} - 2\ell\eps && \textrm{as $\ell \geq 1$ and $1-\Vol{I_1^*} \geq 0$} \\
& > \ell\Vol{I_1^*} && \textrm{as $\eps < \frac{1}{4\ell}$}.
\end{align*}
Since this contradicts $AÊ\leq \Vol{I} \leq \ell \Vol{I_1^*}$ from Inequality~(\ref{eqn:volI}), all items are packed.

\textbf{Case 2.} $\Vol{B_\ell} \geq 1/2 -\eps$ and $\Vol{C_\ell} \geq 1/2 - \eps$. \\
In this case $T'$ might contain wide and high items. On the other hand the bin $C_1$ is still available for packing. We use the area of the items in bin $C_\ell$ to bound the total area of the packed items. With a calculation as in Case 1 we get
\begin{align*}
A & \geq  \Vol{B_1} + \Vol{C_\ell} + \overbrace{(2\ell - 3) \Big(\frac{1}{2} - \eps\Big)}^{\textrm{all bins except $B_1, C_1,C_\ell$}}\\
& \geq \ell\Vol{I_1^*} + \Vol{C_\ell} - \frac{1}{2} - (2\ell-2)\eps.
\end{align*}
Again from Inequality~(\ref{eqn:volI}) we have $\Vol{I} \leq \ell\Vol{I_1^*}$. Thus we get 
\begin{align}
\Vol{T'} &Ê\leq \Vol{I}-A \leq \frac{1}{2} + (2\ell -2)\eps - \Vol{C_\ell}  \quad \textrm{and hence} \label{eqn:Cell1} \\
\Vol{T'} &Ê\leq (2\ell - 1)\eps && \textrm{as $\Vol{C_\ell} \geq 1/2-\eps$} \label{eqn:Cell2}
\end{align} 
By Inequality~(\ref{eqn:Cell1}) we know that if $\Vol{C_\ell} > 1/2 + (2\ell-2)\eps$ then $\Vol{T'} < 0$ and thus all items are packed. Therefore we assume that $\Vol{C_\ell} \leq 1/2 + (2\ell-2)\eps$.

We consider the set $\widehat{H} = \{r_i \in C_\ell \mid h_i \leq 3/4\}$. If $\width{w}(\widehat{H}) \geq (4\ell-3)\eps$ then remove $\widehat{H}$ from $C_\ell$ and pack it in a stack in $C_1$ instead. As we now have $\Vol{C_\ell} \leq 1/2 - (2\ell-1)\eps$ and $\Vol{T' \setminus W} \leq \Vol{T'} \leq (2\ell-1)\eps$ by Inequality~(\ref{eqn:Cell2}), we can pack $T' \setminus W$ together with $C_\ell$. The remaining items $T' \cap W$ have total height at most $2(2\ell-1)\eps$ and thus fit above $\widehat{H}$ into $C_1$.

Otherwise, there is no item $r' = (w',h')$ in $C_\ell$ with $h' \leq 3/4$ and $w' \geq (4\ell-3)\eps$. Let $\widetilde{H} = \{r_i \in C_\ell \cup T' \mid h_i > 3/4\}$. Observe that we have
\begin{align*}
\width{w}(\widetilde{H}) \leq \frac{\Vol{C_\ell \cup T'}}{3/4} & \leq \frac{4}{3}Ê\Big(\frac{1}{2} + (2\ell-2)\eps + (2\ell-1)\eps\Big)  = \frac{2}{3} + \Big(\frac{16}{3}\ell -4\Big)\eps < 1.
\end{align*}
We take \emph{all} high items from $C_\ell \cup T'$ and order them by non-increasing height. Now pack the items greedily into a stack of width up to $1$ and pack this stack into $C_\ell$. We have $\width{w}(C_\ell) \geq 1-(4\ell-3)\eps$ as the total width of the items from $\widetilde{H}$ is bounded and thus all further items have width at most $(4\ell-3)\eps$ (as otherwise $\widehat{H} \geq (4\ell-3)\eps$ and we had solve the problem in the previous step). For the remaining items $T'$ we have $h_{\max}(T') \leq 3/4$ and $\Vol{T'} \leq 1/2 - (2\ell-2)\eps - \Vol{C_\ell} \leq (4\ell-7/2) \eps \leq 1/4$ (by Inequality~\ref{eqn:Cell1} and as $\Vol{C_\ell} \geq \width{w}(C_\ell)/2 \geq 1/2 - (2\ell-3/2)\eps$ and $\eps \leq 1/(16\ell)$. Thus $T'$ can be packed into bin $C_1$ using Steinberg's algorithm.

\textbf{Case 3.} $\Vol{B_\ell} < 1/2 -\eps$ and $\Vol{C_\ell} \geq 1/2 - \eps$. \\
If $\width{w}(T' \cap H) \leq 1$ then pack $T' \cap H$ in $C_1$ and proceed as in Case 1. 

The subcase where $\width{w}(T' \cap H) > 1$ is the most difficult of all four cases. The challenge that we face is that $\width{w}(H)$ can be close to $\ell$ (which is a natural upper bound) but we can only ensure a packed total width of at least $\ell(1 - 2\eps)$ in the bins $C_1, \ldots, C_\ell$. So we have to pack high items into the bins $B_2, \ldots, B_\ell$. We distinguish two further subcases.
\begin{enumerate}
\item Assume that there exists $j \in \{2, \ldots, \ell\}$ with $\width{w}(L_j \cap H) > 10\ell\eps$, i.e., the total width of the items that are large and high and associated with the $i$-th bin in an optimal packing is large enough such that moving these items away gives sufficient space for the still unpacked high items.

Go back to Step~\ref{step:separation} in the first part of the algorithm and omit separating the items from $L_j$. Instead we assign the items from $L_j$ to bin $B_j$ and keep $C_j$ free at the moment. Note that $L_j$ admits a packing into a bin as $L_j$ corresponds to the large items in a bin of an optimal solution. Since $|L_j| \leq 1/\eps$ we can find such a packing in constant time.

While greedily adding tiny items in Step~\ref{step:greedy}, we skip $B_j$ for the wide items and we continue packing high items in $C_1$ after we have filled $C_2, \ldots, C_\ell$. As we moved high items of total width at least $10\ell\eps$ to $B_j$ and we can pack high items of total width at least $1-2\eps$ into each bin, no high items remains after this step. Finally, greedily add remaining tiny items to bins $B_2, \ldots, B_\ell$ except $B_j$, using the area bound $1/2$.

Now consider the bins $C_1$ and $C_j$. Both contain only tiny items, as we moved the large items from $C_j$ to $B_j$. We packed the tiny items greedily by height and thus all items in $C_j$ have height greater or equal to any item in $C_1$. Let $h'$ be greatest height in $C_1$. Then we have $\Vol{C_j}Ê\geq h' (1-2\eps)$. Furthermore, we know that $\width{w}(H) > \ell(1-2\eps)$. Thus we have 
\begin{align*}
\Vol{H}Ê> (\ell-1)(1/2-\eps) + h' (1-2\eps).
\end{align*}
If after the modified Step~\ref{step:greedy} tiny items remain unpacked, then all bins $B_i$ for $i \in \{2, \ldots, \ell\}\setminus \{j\}$ have area $\Vol{B_i} \geq 1/2 - \eps$. By summing up the area of the high items separately we get a total packed area of at least
\begin{align*}
A & \geq \Vol{B_1} + \overbrace{(\ell-2)\Big(\frac{1}{2}-\eps\Big)}^{B_i \textrm{ for } i \in \{2, \ldots, \ell\}\setminus \{j\}} + \Vol{H} \\
& > \Vol{B_1} + (\ell-2)\Big(\frac{1}{2}-\eps\Big) + (\ell-1)\Big(\frac{1}{2}-\eps\Big) + h' (1-2\eps)\\
& \geq \Vol{I_1^*} + \ell -\frac{3}{2} + h' - (2\ell - 2 + 2h')\eps && \textrm{by Inequality~(\ref{eqn:B1})} \\
& \geq \Vol{I_1^*} + \ell\Vol{I_1^*} + \ell(1-\Vol{I_1^*}) - \frac{3}{2} + h' - 2\ell\eps && \textrm{as $h' \leq 1$}\\
& \geq \ell\Vol{I_1^*} - \frac{1}{2} + h' - 2\ell\eps && \textrm{as $\ell \geq 1$ and $1-\Vol{I_1^*} \geq 0$}. 
\end{align*}
On the other hand we have $A \leq \Vol{I}  \leq \ell\Vol{I_1^*}$ by Inequality~(\ref{eqn:volI}). Thus the total area of the remaining items $T'$ is at most $\Vol{T'}Ê\leq 1/2 + 2\ell\eps - h'$. If $h' \geq 1/2 + 2\ell\eps$ we packed all items. 

Otherwise we have $1/2 < h' < 1/2+2\ell\eps$ and 
\begin{align}
\Vol{T'} \leq 2\ell\eps.\label{eqn:arearemainingitems}
\end{align}
We will pack $T'$ in $C_1$ together with the already packed high items. Observe that 
\begin{align}
\width{w}(C_1 \cap H) \leq 1-8\ell\eps \label{eqn:widthC_1H}
\end{align}
as we move high items of total area of at least $10\ell\eps$ to $B_j$ and all bins $C_2, \ldots, C_\ell$ are filled up to a width of at least $1-2\eps$.

We pack the remaining items $T'$ into three rectangles $R_1 = (1, 4\ell\eps)$, $R_2 = (8\ell\eps, 1-4\ell\eps)$ and $R_3 = (1 - 8\ell\eps, 1/2-6\ell\eps)$ which can be packed in $C_1$ together with $C_1 \cap H$ as follows--see Figure~\ref{fig:Case2}. Pack the stack of $C_1 \cap H$ in the lower left corner and pack $R_3$ above this stack. As $h' + \height{h}(R_3) \leq 1 - \height{h}(R_1)$, $R_1$ fits in the top of $C_1$. Finally, pack $R_2$ in the bottom right corner. This is possible as $\height{h}(R_2) \leq 1-\height{h}(R_1)$ and $\width{w}(C_1 \cap H) \leq 1-8\ell\eps = 1-\width{w}(R_3)$ by Inequality~(\ref{eqn:widthC_1H}).

\begin{figure}  [!tbp]
  \centering
  \includegraphics[width=.33\textwidth]{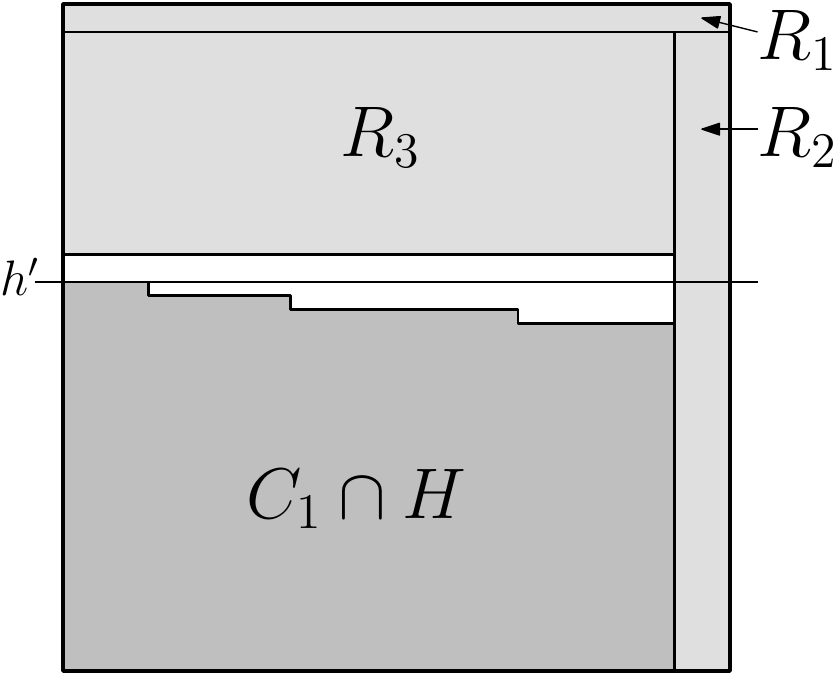}
  \caption{The three rectangles $R_1$, $R_2$ and $R_3$ for packing $T'$ together with $C_1 \cap H$ in $C_1$. \label{fig:Case2}}
\end{figure}

Now pack $T' \cap W$ in a stack in $R_1$ (which is possible since $\height{h}(T'\cap W) \leq 2\Vol{T'} \leq 4\ell\eps$ by Inequality~(\ref{eqn:arearemainingitems})) and pack all $r_i = (w_i, h_i) \in T'$ with $h_i > 1/2-6\ell\eps$ in a vertical stack in $R_2$ (this fits as the total width of items with $h_i > 1/2-6\ell\eps$ is at most $\Vol{T'} / (1/2-6\ell\eps) \leq 8\ell\eps$ by Inequality~(\ref{eqn:arearemainingitems}) and as $6\ell\eps \leq 1/4$). Finally, use Steinberg's algorithm to pack the remaining items in $R_3$. This is possible since $w_{\max} \leq 1/2$, $h_{\max} \leq 1/2 - 6 \ell\eps$ and 
\begin{align*}
2\Vol{T'} \leq 4\ell\eps & \leq  \frac{1}{2}-14\ell\eps + 96 \ell^2\eps^2\\
& = (1-8\ell\eps)\Big(\frac{1}{2}-6\ell\eps\Big) - (1-1+8\ell\eps)_+\Big(1-12\ell\eps - \frac{1}{2}+6\ell\eps\Big)_+.
\end{align*}

This finishes the first case where we assumed that there exists $j \in \{2, \ldots, \ell\}$ with $\width{w}(L_j \cap H) > 10\ell\eps$.

\item Now assume that we have $\width{w}(L_j \cap H) \leq 10\ell\eps$ for all $j \in \{2, \ldots, \ell\}$ and in particular $w_i \leq 10\ell\eps$ for all items $r_i = (w_i, h_i) \in (L_2 \cup \cdots \cup L_\ell)$.
Thus all high items that are not packed in $B_1$ are thin, i.e., have width at most $10\ell\eps$. We use this fact by repacking the high items greedily by non-increasing height in the bins $C_1, \ldots, C_\ell$. Each bin contains high items of total width at least $1-10\ell\eps$ afterwards. Thus high items of total width at most $10\ell^2\eps$ remain unpacked. This is worse than in the previous case but since we repacked all items we can get a nice bound on the height of the unpacked items. Let $h'$ be the smallest height in $C_\ell$. Then all items in $C_1, \ldots, C_\ell$ have height at least $h'$ and the remaining items $T'$ have height at most $h'$. If there is an $i \in \{2, \ldots, \ell\}$ with $\Vol{B_i} \leq 1- h' - 10\ell^2\eps$ then we can add the remaining items $T'$ to $B_i$ using Steinberg's algorithm. To see this note that $h_{\max}(T' \cup B_i) \leq h'$ and $2\Vol{T'  \cup B_i} \leq 2 \Vol{B_i} + 2(10\ell^2\eps)h' \leq 2-2h'$ which corresponds to the bound from Theorem~\ref{thm:Steinberg}.

Otherwise for all $i \in \{2, \ldots, \ell\}$ we have $\Vol{B_i} \geq 1-h'-0\ell^2\eps$. Then we packed a total area of at least
\begin{align*}
A & \geq \Vol{B_1} + \Vol{C_1 \cup \cdots \cup C_\ell} + \Vol{B_2 \cup \cdots \cup B_\ell} \\
& \geq \Vol{B_1} + h' \ell (1-10\ell\eps) + (\ell-1)(1-h'-10\ell^2\eps ) \\
& \geq \Vol{I_1^*} + \ell-1 + h' - (10\ell^2 + 10\ell^2 + 10\ell^3+1)\eps \\
& > \Vol{I_1^*} + \ell\Vol{I_1^*} + \ell(1-\Vol{I_1^*})-1 + \frac{1}{2} - (20\ell^3 +1)\eps && \textrm{as $\ell \geq 2$ and $h' > \frac{1}{2}$} \\
& \geq \ell\Vol{I_1^*}Ê+ \frac{1}{2} - (20\ell^3 +1) \eps && \textrm{as $\ell \geq 1$ and $1-\Vol{I_1^*} \geq 0$}.
\end{align*}
And since $1/2 - (20\ell^3 +1)\eps \geq 0$ and $A \leq \Vol{I} \leq \ell\Vol{I_1^*}$, no item remains unpacked.
\end{enumerate}
Thus in both subcases we were able to derive a feasible packing.

\textbf{Case 4.} $\Vol{B_\ell} \geq 1/2 -\eps$ and $\Vol{C_\ell} < 1/2 - \eps$.\\
In this case $T'$ contains no high items. If there are also no wide items remaining in $T'$, apply the methods of Case 1. Otherwise we use the following process to free some space in the bins for wide and small items, i.e., $B_2, \ldots, B_\ell$. The idea of the process is to move small items from bins $B_i$ to bins $C_i$ and thereby move the tiny high items $T' \cap H$ further in direction $C_\ell$. To do this, let $S_i = L_i \cup S$ be the set of small items in $B_i$. 

Remove the tiny items from $C_2, \ldots, C_\ell$. If there exists an item $r \in S_i \cap B_i$ for some $i \in \{2, \ldots, \ell\}$ then remove $r$ from $B_i$ and add it to $C_i$, otherwise stop. Adding $r$ to $C_i$ is possible as $C_i$ is a subset of $L_i = I_i^*$ and thus feasible. Add wide items from $W \cap T'$ to $B_i$ until $\Vol{B_i} \geq 1/2 - \eps$ again or $W \cap T' = \emptyset$. Finally, add the high items from $H \cap T'$ to $C_2, \ldots, C_\ell$ in a greedy manner analogously to Step~\ref{step:greedy} of the first part of the algorithm but using the area bound $\Vol{C_i} \leq 1/2$. This ensures that all sets $C_i$ can be packed with Steinberg's algorithm. Repeat this process until $S_i \cap B_i = \emptyset$ for all $i \in \{2, \ldots, \ell\}$ or $T'$ contains a high item at the end of an iteration.

There are two ways in which this process can stop. First if we moved all items from $S_i$ to $C_i$, and second if in the next step a high item would remain in $T'$ after the process. In the first case we have reached a situation as in Case 2 or Case 3, i.e., the roles of the wide and the high items are interchanged and $\Vol{B_\ell} \geq 1/2 - \eps$. Thus by rotating all items and the packing derived so far, we can solve this case analogously to Case 2 or Case 3, depending on $\Vol{C_\ell}$. 

In the second case, let $r^*$ be the item that stopped the process, i.e., if $r^*$ is moved from $B_i$ to $C_i$ for some $i \in \{2,\ldots, \ell\}$, at least one high item would remain in $T'$. Then, instead of moving $r^*$ to $C_i$ we move $r^*$ to $C_1$ and add items from $T'$ to $C_1$ and $C_i$ as long as $\Vol{C_1} \leq 1/2$ and $\Vol{C_i} \leq 1/2$. The resulting sets can be packed with Steinberg's algorithm as no item has width greater than $1/2$. If after this step still items remain unpacked then a calculation similar to Case 1 gives a total packed area of
\begin{align*}
A & \geq  \Vol{B_1}  + \overbrace{(2\ell - 2) \Big(\frac{1}{2} - \eps\Big)}^{\textrm{all bins except $B_1, B_i$}} + \overbrace{\frac{1}{2} - \eps - \Vol{r^*}}^{\textrm{bin $B_i$}} \\
& \geq \ell\Vol{I_1^*} + \frac{1}{2} - 2\ell\eps - \Vol{r^*} > \ell\Vol{I_1^*} && \textrm{since $\Vol{r^*} \leq 1/4$ and $\eps < 1/(8\ell)$}.
\end{align*}
From Inequality~(\ref{eqn:volI}) we have the contradiction $A \leq \Vol{I} \leq \ell\Vol{I_1^*}$. Thus all items are packed.

Note that we use crossreferences between the four cases but there are no circles in these references, i.e., Case 3 uses Case 1, and Case 4 uses Cases 1, 2 and 3.

We showed the following lemma.

\begin{lemma}\label{lem:OptConst}
There exists a polynomial-time algorithm that, given an instances $I$ with $1 < \Opt{I} < k$, returns a packing in $2\Opt{I}$ bins. 
\end{lemma}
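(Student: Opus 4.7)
The plan is to package the entire multi-case analysis laid out in the preceding text into a single algorithm and show its correctness. First I would guess $\ell = \Opt{I}$ (only a constant number of choices since $\ell < k$) and guess the partition $L_1, \ldots, L_\ell$ of the large items into the optimal bins (only polynomially many choices because $|L| \leq \ell/\eps$). Opening $2\ell$ bins $B_1, \ldots, B_\ell, C_1, \ldots, C_\ell$, I would run the BCS algorithm on $L_1 \cup T$ with the artificial profits $p_i = \Vol{r_i}(1/\eps + 1)$ for large items and $p_i = \Vol{r_i}$ for tiny items; the profit ratio forces all of $L_1$ to be packed, and Theorem~\ref{thm:generalPTAS} with accuracy $\eps^2/(1+2\eps)$ then yields the crucial area bound $\Vol{B_1} \geq \Vol{I_1^*} - \eps$ derived in Inequality~(\ref{eqn:B1}). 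For $i \geq 2$ I would put the wide and small items of $L_i$ in $B_i$ and the high and big items in $C_i$, then greedily add tiny wide items to the $B_i$'s (capped at area $1/2$) and tiny high items to the $C_i$'s (capped at width $1$). Corollary~\ref{cor:big} makes each $B_i$ feasible via Steinberg, and each $C_i$ is a simple stack.

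Next I would run the case analysis on $\Vol{B_\ell}$ and $\Vol{C_\ell}$ using the bound $\Vol{I} \leq \ell\Vol{I_1^*}$ from Inequality~(\ref{eqn:volI}) as the driving contradiction. In Case~1, where both last bins are below $1/2-\eps$, no wide or high items remain unpacked, so the remaining tiny items are simply distributed greedily and any leftover would force $A > \ell \Vol{I_1^*}$. In Case~2, where both last bins are above $1/2-\eps$, the bin $C_1$ is still empty and I would exploit it by splitting $C_\ell$ into short ($h_i \leq 3/4$) and tall high items: if the short items have sufficient total width, move them to $C_1$ to free up area for $T' \setminus W$; otherwise the tallness bound forces $\width{w}(\widetilde{H}) < 1$, so all high items can be repacked in a single greedy stack in $C_\ell$ and $T'$ handled by Steinberg in $C_1$.

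In Case~3 (the hard case), if $\width{w}(T' \cap H) \leq 1$ I would first move $T' \cap H$ into the free $C_1$ and reduce to Case~1. The genuine difficulty is the subcase $\width{w}(T' \cap H) > 1$: I would either locate some $L_j$ containing high items of total width $> 10\ell\eps$, move those items to $B_j$ (which is legal because $L_j$ already admits a packing), free $C_j$, and repack the high items; or conclude that all high items outside $B_1$ are thin (width $\leq 10\ell\eps$), permitting a global repack of high items by non-increasing height with loss at most $10\ell^2\eps$ per bin. Case~4 would be handled by iteratively shifting small items from $B_i$ to $C_i$ and tiny high items toward $C_\ell$; the process either ends with all small items moved (reducing after a 90-degree rotation to Case~2 or Case~3) or is interrupted by some item $r^*$ whose relocation to $C_1$ allows a final Steinberg packing and a contradiction via the area bound.

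The main obstacle I expect is the first subcase of Case~3, where the free bin $C_1$ must simultaneously accommodate the already-placed high stack $C_1 \cap H$ and a nontrivial set of remaining tiny items $T'$ that may contain wide and high pieces. The crux is the three-rectangle decomposition $R_1 = (1, 4\ell\eps)$, $R_2 = (8\ell\eps, 1-4\ell\eps)$, $R_3 = (1-8\ell\eps, 1/2-6\ell\eps)$ sketched in Figure~\ref{fig:Case2}: its geometric feasibility relies on the width bound $\width{w}(C_1 \cap H) \leq 1 - 8\ell\eps$, which in turn holds only because we moved high items of total width at least $10\ell\eps$ to $B_j$ and filled $C_2, \ldots, C_\ell$ up to width $1-2\eps$. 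Verifying that Steinberg's condition holds on $R_3$ with the area bound $\Vol{T'} \leq 2\ell\eps$ from Inequality~(\ref{eqn:arearemainingitems}) is the last technical step that makes the entire case analysis go through.
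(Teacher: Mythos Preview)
Your proposal is correct and follows essentially the same approach as the paper: the same guessing of $\ell$ and the $L_i$, the same BCS-based packing of $B_1$ with artificial profits to force $L_1$ in and obtain Inequality~(\ref{eqn:B1}), the same separation/greedy filling for $i\ge 2$, and the same four-case analysis on $\Vol{B_\ell}$ and $\Vol{C_\ell}$, including the three-rectangle decomposition of $C_1$ in the hard subcase of Case~3 and the item-shifting process in Case~4. There is no substantive difference between your outline and the paper's argument.
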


\section{The overall algorithm}
\label{sec:final}

Let us recapitulate the different cases of our algorithm. We use the asymptotic algorithm by Bansal et al.\citesuperscript{BansalCapraraSviridenko:2006a} that solves instances $I$ with $\Opt{I} \geq k$ for some constant $k$. As we do not know the optimal value in advance, we apply our algorithms in any case but each algorithm is allowed to fail if its requirement on $\Opt{I}$ is not satisfied. 

For $\Opt{I} = 1$ we presented an algorithm in Section~\ref{sec:Opt1} that returns a packing into two bins. The algorithm is based on two major cases according to the total height of the $\delta$-wide and the total width of the $\delta$-high items. Either these height (width) is suitably bounded (for some $\delta \in (\eps, 1/2]$) in which case we apply the methods of Lemma~\ref{lem:smallheight}, or we get a substantial area guarantee. We utilize this area guarantee in the methods of Lemmas~\ref{lem:largeW} and \ref{lem:smallW}.

For $1<\Opt{I} <k$ we presented an algorithm in Section~\ref{sec:OptConst} that is based on an enumeration of some large items and a separation of the wide and the high items. Lemma~\ref{lem:OptConst} shows that the algorithm outputs a packing into at most $2\Opt{I}$ bins. In total we showed the following theorem.

\begin{theorem}
There exists a polynomial-time 2-approximation algorithm for two-dimensional bin packing.
\end{theorem}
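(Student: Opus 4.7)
The plan is to assemble the overall $2$-approximation by combining the three algorithmic components developed in the paper and arguing by case distinction on $\Opt{I}$. The three pieces are: (i) the asymptotic algorithm of Bansal, Caprara \& Sviridenko, whose $1.525$-guarantee implies the existence of a constant $k$ such that its output uses at most $2\,\Opt{I}$ bins whenever $\Opt{I}\geq k$; (ii) the algorithm of Section~\ref{sec:Opt1}, which by Lemmas~\ref{lem:smallheight}, \ref{lem:largeW} and \ref{lem:smallW} packs any input with $\Opt{I}=1$ into two bins; and (iii) the algorithm of Section~\ref{sec:OptConst} (Lemma~\ref{lem:OptConst}), which packs any input with $1<\Opt{I}<k$ into $2\,\Opt{I}$ bins.

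First I would address the fact that $\Opt{I}$ is not known a priori. The overall procedure simply executes all three algorithms on $I$ and outputs the packing that uses the fewest bins. Each subroutine is allowed to fail (or return a worse solution) when its own hypothesis on $\Opt{I}$ is violated; what matters is that the three hypotheses jointly cover all possible values of $\Opt{I}$, so that for every input at least one subroutine is in its applicability regime and delivers a packing of size at most $2\,\Opt{I}$. Taking the minimum-size output then trivially yields the claimed ratio.

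The case analysis is then straightforward: if $\Opt{I}\geq k$ the Bansal et al.\ routine certifies the bound; if $\Opt{I}=1$ Section~\ref{sec:Opt1} gives $2=2\,\Opt{I}$ bins; and if $2\leq \Opt{I}<k$ Lemma~\ref{lem:OptConst} gives $2\,\Opt{I}$ bins. Polynomial running time follows in each branch: the Section~\ref{sec:Opt1} algorithm invokes Steinberg's algorithm and the BCS algorithm a constant number of times; the Section~\ref{sec:OptConst} algorithm enumerates only over sets of items of area at least $\eps$ and over at most $k$ possible values of $\ell$, both of which are bounded by constants once $k$ is fixed; and the asymptotic algorithm is polynomial by its own analysis.

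The main conceptual obstacle is that the constant $k$ from the Bansal et al.\ algorithm is not explicitly known. The plan handles this by observing that we do not need to compute $k$: the existence of such a $k$ suffices for Lemma~\ref{lem:OptConst} to apply (with $\eps=1/(40k^3+2)$ treated as a fixed constant in the analysis), and the combined procedure simply runs all three subroutines and takes the best, which is correct regardless of where the threshold actually lies. Matching $\mathcal{NP}$-hardness from Leung et al.\citesuperscript{LeungTamWongYoungChin:1990a}, already cited in the introduction, then implies that the ratio $2$ is best possible unless $\mathcal{P}=\mathcal{NP}$, completing the argument.
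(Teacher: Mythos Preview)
Your proposal is correct and follows essentially the same approach as the paper: combine the Bansal--Caprara--Sviridenko asymptotic algorithm for $\Opt{I}\geq k$, the Section~\ref{sec:Opt1} routine (Lemmas~\ref{lem:smallheight}, \ref{lem:largeW}, \ref{lem:smallW}) for $\Opt{I}=1$, and Lemma~\ref{lem:OptConst} for $1<\Opt{I}<k$, running all of them and keeping the best output since $\Opt{I}$ is unknown. The one slight imprecision is your remark that $k$ need not be computed: the paper explicitly notes that $k$ must be known to set $\eps=1/(40k^3+2)$ and to bound the enumeration in Section~\ref{sec:OptConst}, but that an upper bound on $k$ can in principle be extracted from the Bansal et al.\ analysis, so this does not affect the existence statement of the theorem.
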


We cannot give an explicit running time of our algorithm as it is based on the BCS algorithm for which the running time is only stated as polynomial in $n$ for any fixed $\eps > 0$ and $r > 1$. The running time of our algorithm is also bounded by some polynomial in $n$.

Also note that in order to implement our algorithm one needs to know the constant $k$. It is not made explicit in in paper of Bansal et al.\citesuperscript{BansalCapraraSviridenko:2006a} but can in principle be bounded from above by their methods.

\bibliography{Archiv}
\bibliographystyle{abbrv}

\end{document}